\newcommand{\doublewidetilde}[1]{{%
		\mathpalette\double@widetilde{#1}}}
\newcommand{\double@widetilde}[2]{%
		\sbox\z@{$\m@th#1\widetilde{#2}$}%
		\ht\z@=.5\ht\z@
		\widetilde{\box\z@}}
\newtheorem{theorem}{Theorem}
\newtheorem{lemma}{Lemma}
\newtheorem{corollary}{Corollary}
\begin{document}
%
\title{\huge Joint Power Allocation and User Scheduling in Integrated Satellite-Terrestrial Cell-Free Massive MIMO IoT Systems}

\author{Trinh~Van~Chien, \textit{Member}, \textit{IEEE}, Ha~An~Le, Ta Hai Tung, Hien~Quoc~Ngo, \textit{Senior Member}, \textit{IEEE}, and \\ Symeon~Chatzinotas, \textit{Fellow}, \textit{IEEE}  \vspace*{-1cm}
\thanks{Trinh Van Chien and Ta Hai Tung are with the School of Information and Communication Technology (SoICT), Hanoi University of Science and Technology (HUST), 100000 Hanoi, Vietnam (email: chientv@soict.hust.edu.vn, tungth@soict.hust.edu.vn). Ha An Le is with the Department of Electrical and Computer Engineering, Seoul National University, Korea (email: 25251225@snu.ac.kr).  Hien Quoc Ngo is with the School of Electronics, Electrical Engineering and Computer Science, Queen's University Belfast, Belfast BT7 1NN, United Kingdom (email: hien.ngo@qub.ac.uk). Symeon Chatzinotas is with the Interdisciplinary Centre for Security, Reliability and Trust (SnT), University of Luxembourg, L-1855 Luxembourg, Luxembourg (email: symeon.chatzinotas@uni.lu).} 
}

\maketitle

\begin{abstract}
Both space and ground communications have been proven effective solutions under different perspectives in Internet of Things (IoT) networks. This paper investigates multiple-access scenarios, where plenty of IoT users are cooperatively served by a satellite in space and access points (APs) on the ground. Available users in each coherence interval are split into scheduled and unscheduled subsets to optimize limited radio resources. We compute the uplink ergodic throughput of each scheduled user under imperfect channel state information (CSI) and non-orthogonal pilot signals. As maximum-radio combining is deployed locally at the ground gateway and the APs, the uplink ergodic throughput is obtained in a closed-form expression. The analytical results explicitly unveil the effects of channel conditions and pilot contamination on each scheduled user. By maximizing the sum throughput, the system can simultaneously determine scheduled users and perform power allocation based on either a model-based approach with alternating optimization or a learning-based approach with the graph neural network. Numerical results manifest that integrated satellite-terrestrial cell-free massive multiple-input multiple-output systems can significantly improve the sum ergodic throughput over coherence intervals. The integrated systems can schedule the vast majority of users; some might be out of service due to the limited power budget. 
\end{abstract}

\begin{IEEEkeywords}
Integrated satellite-terrestrial networks, linear processing, throughput maximization, alternating optimization, graph neural networks
\end{IEEEkeywords}

%
\IEEEpeerreviewmaketitle

\vspace*{-0.25cm}
\section{Introduction}
Wireless communications, especially Internet of Things (IoT) systems, have undergone a remarkable transformation, and notably, terrestrial networks are the dominant modes that provide enhanced communication speeds and quality of service (QoS)  \cite{guo2021enabling}. With mobile phones or other intelligent users, broadband services with low latency can be accessed within the ground access points (APs) range  through the utilization of joint coherent transmission techniques defined in cell-free massive mutiple-input multiple-output (MIMO) communications \cite{ngo2017cell}. Looking ahead to the future, the sixth-generation (6G) wireless network is anticipated to deal with  an unprecedented surge in device demand. The coverage requirements of 6G will be crucial to support widely distributed devices across vast areas, including humans, machines, and various interconnected objects. According to some estimates, the number of devices will approach over 24 billion by 2030 \cite{Li2018-IOT}. However, due to some limitations, including geographical locations and operation costs, it will be very challenging to guarantee coverage solely using terrestrial cellular networks. Specifically, the terrestrial networks are often deployed in areas with high-density populations for economic benefits \cite{Kuang2017-TScomm}. The vast airspace and sea areas are not fully covered by traditionally mobile networks due to geographical topology. Based on the Global System for Mobile Communications Assembly (GMSA) report, over $40\%$ of the world's surface lacks network coverage, leaving a significant portion without access to communication networks. Additionally, approximately $4.6$ billion internet users eagerly await an improved network with higher speed and reduced latency \cite{GSMA2019}.

To address this issue, satellite communication networks offer an immediate solution to coverage problems by providing extensive coverage capabilities \cite{Su2019-LEO}. Satellite communication networks could complement terrestrial networks and provide global coverage with ubiquitous connectivity. Current terrestrial cellular networks have provided a promising solution to handle the rapid growth of massive connectivity for IoT networks in spheres of life \cite{da2014internet}. 
LEO satellites orbit the Earth in a circular (or elliptical) pattern between $250$ and $2000$~km above the surface \cite{sciddurlo2021looking} offer distinctive merits to connect terrestrial devices on the ground, which can communicate objects with very limited or even no access to traditional terrestrial networks.  The integration of satellite technology into ground networks presents remarkable opportunities for the advancement of future wireless communications, i.e., beyond 5G and toward 6G radio communications \cite{van2022space,zhu2021integrated}. The potential integration architectures of the two networks have recently been discussed in various publications such as \cite{zhao2022interlink,jiang2021qoe, zhang2021stochastic} and references therein. By integrating terrestrial relays into satellite networks, the  satellite-terrestrial architecture can, in comparison to a traditional single network, first aid in improving communication dependability \cite{nguyen2022security}. Moreover, on the basis of satellite backhaul transmission, the integrated satellite-terrestrial design assists in extending network coverage effectively \cite{al2021modeling}. Besides, time and frequency sharing within these integrated network architectures help upgrade spectral and energy efficiency productively \cite{zhang2019spectrum}. By virtue of the network cooperation, the integrated satellite-terrestrial systems are capable of guaranteeing  seamless service connectivity and providing improved transmission \cite{wang2022mega}. Nonetheless, most of the above-mentioned related works consider perfect channel state information. In addition, their resource allocations  rely on slow-fading channel models, which may be impossible to deploy in practice under high mobility. There is still  room for analyzing network performance and allocating radio resources applicable for an extended period with lower computational complexity by only exploiting channel statistics. 

Machine learning (ML) in general, especially deep learning in particular, has appeared as a promising technology to handle numerous complicated problems in radio communication systems and IoT networks, including channel estimation \cite{Neumann2018MLCE}, radio resource allocation \cite{Sun2018DNNresource,chun2023data}, and signal decoding\cite{OShea2017wirelessAE}. There exist two main approaches in the literature to designing machine learning-based schemes for wireless systems. The first approach is the data-driven approach that exploits neural networks to learn the optimal mapping between the input and the output of the objective. For example, in \cite{Liang2020DNNpower}, a fully connected multi-layer perceptrons (MLPs) is proposed to learn and predict the mapping between instantaneous channels and the optimal power allocation for a $K$-user single-antenna inference system. The second approach is model-driven, which exploits neural networks to replace ineffective policies in classical algorithms \cite{Hengtao2019ModelDriven}. While both approaches can achieve a near-optimal solution compared to conventional methods with much faster execution time in a small-scale network, their performance degrades significantly in large-scale systems with multiple dimensions. For instance, it was demonstrated in \cite{Yifan2021BFneural} that the proposed CNN model for the beamforming problems can achieve performance close to the conventional approach in a two-user network. Still, an 18\% gap occurs for a 10-user network. Furthermore,  these ML-based methods generalize poorly with the significantly dropped performance when the system settings in the test dataset differ from the training  \cite{Yifei2020resource}. These disadvantages prevent machine learning models from being applicable in real-life communication systems, where the system setup often changes dramatically. 

To improve the scalability and generalization of ML-based methods, a promising approach is to embrace the features of the wireless topology into neural network architectures. Graph neural network (GNN) is a well-known approach that can explore the graph topology of radio systems to obtain a comparable performance and remarkable scalability and generalization in very large-scale systems \cite{Yifei2021GNNresource,Junbeom2023GNNBeamforming,Eisen2020GNNresource} . Specifically, it was shown in \cite{Yifei2021GNNresource} that a GNN model trained with $50$ users could achieve similar performance in a much large network with $1000$ users in a resource management problem. Moreover, it is proven in \cite{Yifei2022GNNWireless} that GNN models can obtain a comparable performance in large-scale systems. In contrast, the prediction performance of MLP degrades severely as the number of system parameters enlarges. However, to the best of the authors' knowledge, no related works are designing a GNN for integrated satellite-terrestrial cell-free massive MIMO systems to learn and predict the spectral efficiency  with heterogeneous users from space and ground.

This paper considers an integrated satellite-terrestrial cell-free massive MIMO IoT system where a satellite and multiple APs jointly serve many terrestrial users under practical communication conditions. To the end, the main contributions of this paper are summarized as follows:
\begin{itemize}
	\item We investigate a category of cooperative networks with a presence of a LEO satellite in which users can be either in active or inactive mode according to channel conditions and finite radio resources. The instantaneous CSI is  estimated at the gateway and APs in the uplink pilot training by exploiting the minimum mean square error (MMSE) estimation scheme. To keep a generic framework, we assume an arbitrary pilot reuse pattern. 
	\item From the channel estimates and estimation errors, we derive an uplink ergodic  throughput of each active user, which can be applied to any detection method and channel model. This throughput is then computed in closed form for the maximum-ratio combining (MRC) and spatially correlated Raleigh fading channel model.
	\item By considering the transmit data power coefficients as the variables, we investigate an optimization problem, which maximizes the active users' total ergodic throughput subject to the limited power resource constraints. Despite the inherent non-convexity, this sum ergodic throughput optimization problem allows analyzing system performance with resource management in satellite-terrestrial systems and obtaining the solution to both the transmit power control and user scheduling under multiple access.  
	\item For the model-based approach, we come up with an iterative algorithm that enables to yield a stationary solution to the considered problem in polynomial time by exploiting the alternating optimization (AO) method. In each iteration, the closed-form expression of an optimization variable is derived by investigating the first-order derivative of the Lagrangian function and conditioning the remaining optimization variables.  
	\item For the learning-based approach, we construct a heterogeneous GNN that, distinguished from previous works, only exploits statistical information to optimize the transmit power to every user and schedule all users in the coverage area. Unsupervised learning is
	exploited to train the GNN with only statistical channel state information from  APs and the satellite. 
	   
	\item Numerical results qualify the correctness of our analytical framework for the uplink ergodic  throughput. The proposed optimization algorithm effectively allocates the power budget and schedules the users. The low running time and scalability of the learning-based approach are testified under statistical information deployment.
\end{itemize}
The rest of this paper is organized as follows: Section~\ref{Sec:SysModel} presents in detail the considered system model and the channel estimation procedure from the uplink pilot training phase. After that, the uplink data transmission and the analysis of ergodic throughput are presented in Section~\ref{Sec:UplinkPerfAna} with a closed-form expression obtained as the satellite and APs exploit the MRC technique. The joint sum throughput maximization and user scheduling with respect to the limited power budget at each terrestrial user is formulated and solved in Section~\ref{Sec:SumRate} by using a model-based approach. By using GNN and unsupervised learning, we present a learning-based approach to handle the joint power and user scheduling for maximizing the sum ergodic throughput. Section~\ref{Sec:NumericalResults} provides numerical results extensively, while main conclusions are finally given in Section~\ref{Sec:Conclusion}. 

\textit{Notation}: Lower and upper bold letters are utilized to express vectors and matrices. Meanwhile, the superscript $(\cdot)^H$ and $(\cdot)^T$ are  Hermitian and regular transpose, respectively. $\mathrm{tr}(\mathbf{A})$ denotes the trace of square matrix $\mathbf{A}$, whilst an identity matrix of size $N \times N$ is denoted by $\mathbf{I}_N$.  The expectation of a random variable is $\mathbb{E}\{\cdot\}$ and $\mathcal{CN}(\cdot, \cdot)$ stands for the circularly symmetric Gaussian distribution. Meanwhile, let us denote $\mod(\cdot, \cdot)$  as the modulus operation, and $\lfloor \cdot \rfloor$ as the floor function.
\begin{figure}[t]
	\centering
	\includegraphics[trim=1.5cm 0.0cm 0.0cm 0.0cm, clip=true, width=3.0in]{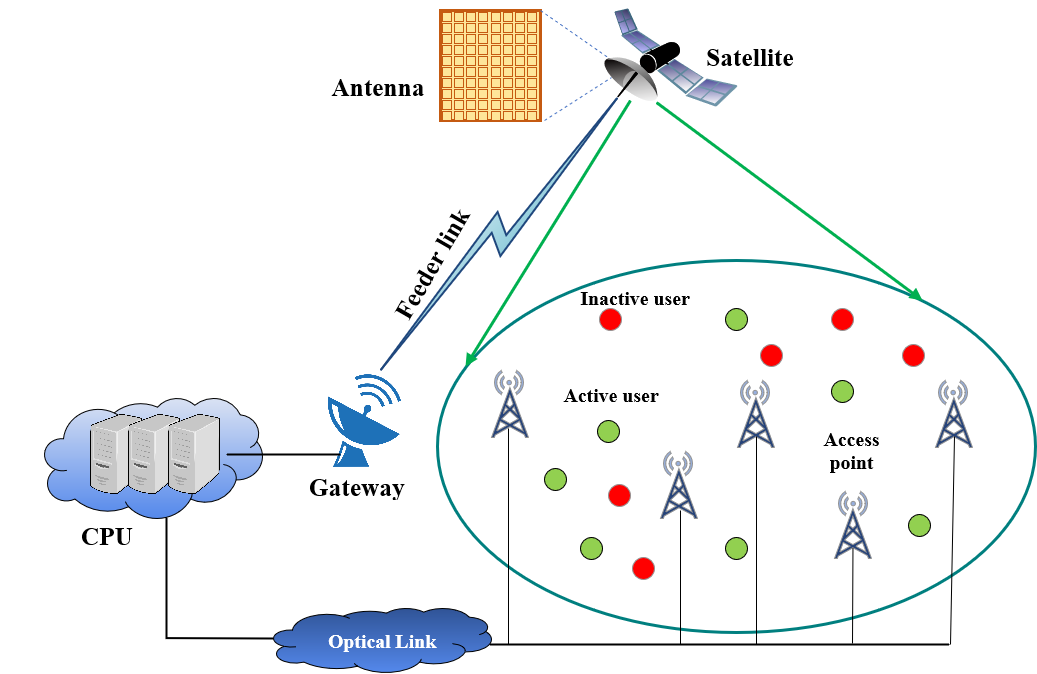} \vspace*{-0.2cm}
	\caption{The considered satellite-terrestrial cooperative IoT network where $M$ APs and a LEO satellite jointly serve the $K$ users with both active (green color) and inactive  (red color) users.}
	\label{FigSysModel}
	\vspace*{-0.5cm}
\end{figure}
\vspace*{-0.25cm}
\section{System Model and Uplink Pilot Training} \label{Sec:SysModel}
\vspace*{-0.25cm}
We consider a cooperative wireless network that includes $M$ APs and $K$ available IoT users, all having a single antenna. Let us denote the universal set $\mathcal{K} = \{1, \ldots, K\}$ comprising all available IoT users, and $\mathcal{M} = \{1, \ldots, m\}$ the set of all APs.  To enhance the system,  a LEO satellite with $N$ antennas is exploited. The network consideration in this paper is illustrated in Fig.~\ref{FigSysModel}. The considered system exploits orthogonal frequency division multiplexing (OFDM), so the block fading channel model is utilized in each OFDM subcarrier.  The instantaneous channels at the APs and satellite are estimated by the uplink pilot training phase. More precisely, in each coherence interval with $\tau_c$ symbols, where the propagation channels of the space and ground communications are quasi-static and frequency flat. Among them, $\tau_p$ symbols are used for the pilot training, and the remaining comprising $\tau_c - \tau_p$ symbols are dedicated to data transmission in the uplink. Due to massive connectivity, a subset of users, defined adequately by solving resource allocation problems, may be dropped from the service during the data transmission. Thus, we denote $\mathcal{Q}$ the set of active users with $\mathcal{Q} \subseteq \mathcal{K}$ and the complement of $\mathcal{Q}$, which is  $\bar{\mathcal{Q}} =  \mathcal{K}\setminus \mathcal{Q}$ comprising inactive users that are out of the service. 

By assuming a rich scattering environment where many scatterers surround the users,  the channel link  between AP~$m$, $\forall m,$ and user~$k$, $\forall k,$ i.e., $g_{mk} \in \mathbb{C}$, follows a Rayleigh fading model, which is $g_{mk}  \sim \mathcal{CN}(0, \beta_{mk})$, where $\beta_{mk}$ denotes the large-scale fading including, for example, both path loss caused by propagation distance and shadow fading such as large obstacles and buildings. Meanwhile, the space channel between the satellite and user~$k$, denoted by $\mathbf{g}_k \in \mathbb{C}^N$, follows a Rician distribution, i.e., $\mathbf{g}_k \sim \mathcal{CN}(\bar{\mathbf{g}}_k, \mathbf{R}_k)$,
in which $\bar{\mathbf{g}}_k \in \mathbb{C}^N$ stands for the LoS components
and $\mathbf{R}_k \in \mathbb{C}^{N \times N}$ indicates the spatial correlation matrix. The channel models are of practical interest and closer to reality, where both the propagation effects and the spatial correlation from the antenna structure are considered. 

\vspace*{-0.25cm}
\subsection{Uplink Pilot Training}
In the considered system, the propagation channels are estimated in the uplink training phase by letting each user transmit a pilot signal, including $\tau_p$ symbols dedicated in every coherence block. We assume that all the $K$ users are involved in the pilot training phase for the network to know the channel information. The same set of $\tau_p$ orthonormal pilot signals are reused across the users, say $\{ \pmb{\phi}_1, \ldots, \pmb{\phi}_{\tau_p}\}$, in which the pilot signal $\pmb{\phi}_k \in \mathbb{C}^{\tau_p}$ is designated to user~$k$. Let us denote $\mathcal{P}_k \subseteq \mathcal{K}$ the subset of user indices that share the same pilot signal as user~$k$ and create the following pilot reuse pattern
\begin{equation} \label{eq:PilotPattern}
\pmb{\phi}_k^H \pmb{\phi}_{k'} = \begin{cases}
	1, & \mbox{if } k' \in \mathcal{P}_k,\\
	0, & \mbox{otherwise}.
\end{cases}
\end{equation}
For the ground link, AP~$m$ receives the training signal, denoted by $\mathbf{y}_{pm} \in \mathbb{C}^{\tau_p}$, is superimposed of all the pilot signals sent over the terrestrial links as
\begin{equation} \label{eq:ypm}
	\mathbf{y}_{pm} = \sum\nolimits_{k = 1}^K \sqrt{p\tau_p} g_{mk} \pmb{\phi}_k^H + \mathbf{w}_{pm}^H,
\end{equation}
where $p$ is the transmit power, which users can grant to every pilot symbol in $\pmb{\phi}_k, \forall k,$ and $\mathbf{w}_{pm} \sim \mathcal{CN}(\mathbf{0}, \sigma_a^2 \mathbf{I}_{\tau_p})$ is additive noise at AP~$m$ with standard derivation $\sigma_a$~[dB] and zero mean. After that, AP~$m$ estimates the desired channel from user~$k$ by projecting the received training signal in \eqref{eq:ypm} onto $\pmb{\phi}_k$ as
\begin{equation} \label{eq:ProjChannelypmk}
y_{pmk} = \sqrt{p\tau_p} g_{mk} + \sum\nolimits_{k' \in \mathcal{P}_k \setminus \{k\}} \sqrt{p\tau_p} g_{mk'} + \mathbf{w}_{pm}^H \pmb{\phi}_k.
\end{equation}
For space communications, the received training signal at the gateway of satellite, $\mathbf{Y}_p \in \mathbb{C}^{N \times \tau_p}$, can be formulated in a similar manner as
\begin{equation}
\mathbf{Y}_p = \sum\nolimits_{k=1}^K \sqrt{p\tau_p} \mathbf{g}_{k} \pmb{\phi}_k^H + \mathbf{W}_{p},
\end{equation}
where $\mathbf{W}_{p} \in \mathbb{C}^{N \times \tau_p}$ is additive noise whose elements distributed as $\mathcal{CN}(0, \sigma^2)$. The desired propagation channel from user~$k$ to the satellite is gathered at the gateway by projecting $\mathbf{Y}_p$ onto $\pmb{\phi}_k$ as
\begin{equation} \label{eq:ProjSigypk}
\mathbf{y}_{pk} = \mathbf{Y}_p \pmb{\phi}_k =  \sqrt{p\tau_p} \mathbf{g}_{k}  +  \sum\nolimits_{k' \in \mathcal{P}_k \setminus \{k\} } \sqrt{p\tau_p} \mathbf{g}_{k'} + \tilde{\mathbf{w}}_{pk},
\end{equation}
where $\tilde{\mathbf{w}}_{pk} = \mathbf{W}_{p} \pmb{\phi}_k^H$ is additive noise at the satellite section, which is weighted by the pilot signal $\pmb{\phi}_k$ and distributed as $\tilde{\mathbf{w}}_{pk} \sim \mathcal{CN}(\mathbf{0}, \sigma_s^2 \mathbf{I}_N)$ with zero mean and standard deviation $\sigma_s$~[dB]. The network will deploy the MMSE estimation to obtain the channel estimates along with the estimation errors as in Lemma~\ref{Lemma:Est}.
\begin{lemma} \label{Lemma:Est}
By exploiting the MMSE estimation locally at each AP, the estimate of the channel $g_{mk}$ between AP~$m$ and user~$k$ can be formulated based on \eqref{eq:ProjChannelypmk} as
\begin{equation} \label{eq:ChanEstgmk}
\hat{g}_{mk} = \mathbb{E}\{ g_{mk} | y_{pmk}\} = c_{mk} y_{pmk},
\end{equation}
where $c_{mk} = \mathbb{E} \{  y_{pmk}^\ast g_{mk} \} / \mathbb{E} \{ | y_{pmk} |^2 \} $ is computed as
\begin{equation}
c_{mk} = \frac{\sqrt{p\tau_p} \beta_{mk} }{ \sum_{k' \in \mathcal{P}_k} p \tau_p  \beta_{mk'} + \sigma_a^2}.
\end{equation}
From \eqref{eq:ChanEstgmk}, we observe that the channel estimate $\hat{g}_{mk}$ is distributed as $\hat{g}_{mk} \sim \mathcal{CN}(0, \gamma_{mk})$, where
\begin{equation}
\gamma_{mk} = \mathbb{E} \{ |\hat{g}_{mk}|^2\} =  \frac{ p\tau_p \beta_{mk}^2 }{ \sum_{k' \in \mathcal{P}_k} p \tau_p  \beta_{mk'} + \sigma_s^2}.
\end{equation}
In addition, let us define the channel estimation error $e_{mk} = g_{mk} - \hat{g}_{mk}$, then $e_{mk} \sim \mathcal{CN}(0, \beta_{mk} - \gamma_{mk} )$. Note that $\hat{g}_{mk}$ and $e_{mk}$, $\forall m,k,$ are independent random variables.

In a similar manner, the channel estimate $\hat{\mathbf{g}}_k$ of the propagation channel $\mathbf{g}_k$ between the satellite and user~$k$ can be formulated based on \eqref{eq:ProjSigypk} as
\begin{equation} \label{eq:ChannelEstgk}
\hat{\mathbf{g}}_k = \bar{\mathbf{g}}_k +  \sqrt{p\tau_p} \mathbf{R}_k \pmb{\Phi}_k ( \mathbf{y}_{pk} -  \bar{\mathbf{y}}_{pk} ),
\end{equation}
where $\bar{\mathbf{y}}_{pk} = \sum_{k' \in \mathcal{P}_k} p \tau_p  \bar{\mathbf{g}}_k$ and $\pmb{\Phi}_k = \big( \sum_{k' \in \mathcal{P}_k} p \tau_p \mathbf{R}_{k'} + \sigma_s^2 \mathbf{I}_N \big)^{-1}$. Then, the channel estimate $\hat{\mathbf{g}}_k$ and the channel estimation error $\mathbf{e}_k = \mathbf{g}_k - \hat{\mathbf{g}}_k$ are respectively distributed as 
\begin{align}
&\hat{\mathbf{g}}_k \sim \mathcal{CN}(\bar{\mathbf{g}}_k,  p \tau_p \mathbf{R}_k \pmb{\Phi}_k \mathbf{R}_k),\\
& \mathbf{e}_k \sim \mathcal{CN}(\mathbf{0},  \mathbf{R}_k - p \tau_p \mathbf{R}_k \pmb{\Phi}_k \mathbf{R}_k).  \label{eq:Ck}
\end{align}
 We observe that $\hat{\mathbf{g}}_k$ and $\mathbf{e}_k$, $\forall k$, are independent random variables. 
\end{lemma}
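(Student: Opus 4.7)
The plan is to invoke the standard MMSE formula for jointly (complex) Gaussian random vectors and specialize it to the two pilot-projection signals in \eqref{eq:ProjChannelypmk} and \eqref{eq:ProjSigypk}. Recall that if $x$ and $y$ are jointly circularly symmetric complex Gaussian, then $\mathbb{E}\{x\,|\,y\} = \mathbb{E}\{x\} + \mathrm{Cov}(x,y)\,\mathrm{Cov}(y,y)^{-1}(y - \mathbb{E}\{y\})$ and the estimation error $x - \mathbb{E}\{x\,|\,y\}$ is independent of $y$ (not just uncorrelated), by the orthogonality principle for Gaussian vectors. Since all propagation channels across users are assumed independent, and the additive noises in \eqref{eq:ypm} and \eqref{eq:ProjSigypk} are independent of the channels, both $(g_{mk}, y_{pmk})$ and $(\mathbf{g}_k, \mathbf{y}_{pk})$ are jointly Gaussian, so the Gaussian MMSE formula applies directly.

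For the terrestrial link I would first note that $g_{mk}$ has zero mean and so does $y_{pmk}$, reducing the formula to $\hat{g}_{mk} = c_{mk} y_{pmk}$ with $c_{mk} = \mathbb{E}\{g_{mk} y_{pmk}^*\}/\mathbb{E}\{|y_{pmk}|^2\}$. Using independence across users and independence of $g_{mk'}$ and $\mathbf{w}_{pm}$, one gets $\mathbb{E}\{g_{mk} y_{pmk}^*\} = \sqrt{p\tau_p}\,\beta_{mk}$ and $\mathbb{E}\{|y_{pmk}|^2\} = \sum_{k'\in\mathcal{P}_k} p\tau_p\,\beta_{mk'} + \sigma_a^2$, which yields the stated $c_{mk}$. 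Then $\gamma_{mk} = \mathbb{E}\{|\hat g_{mk}|^2\} = |c_{mk}|^2\,\mathbb{E}\{|y_{pmk}|^2\} = c_{mk}\sqrt{p\tau_p}\beta_{mk}$ recovers the displayed expression, and the error variance $\beta_{mk} - \gamma_{mk}$ follows from the orthogonal decomposition $g_{mk} = \hat g_{mk} + e_{mk}$ with $\mathbb{E}\{\hat g_{mk} e_{mk}^*\} = 0$, together with Gaussianity to upgrade uncorrelatedness to independence.

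For the satellite link I would repeat the same argument in vector form, but now with nonzero means from the LoS component. Centering $\mathbf{g}_k$ by $\bar{\mathbf{g}}_k$ and $\mathbf{y}_{pk}$ by $\bar{\mathbf{y}}_{pk} = \sum_{k'\in\mathcal{P}_k}\sqrt{p\tau_p}\,\bar{\mathbf{g}}_{k'}$, a direct computation using mutual independence of the $\mathbf{g}_{k'}$ and of $\tilde{\mathbf{w}}_{pk}$ gives the cross-covariance $\mathbb{E}\{(\mathbf{g}_k - \bar{\mathbf{g}}_k)(\mathbf{y}_{pk} - \bar{\mathbf{y}}_{pk})^H\} = \sqrt{p\tau_p}\,\mathbf{R}_k$ and the total covariance $\mathrm{Cov}(\mathbf{y}_{pk}) = \sum_{k'\in\mathcal{P}_k} p\tau_p\,\mathbf{R}_{k'} + \sigma_s^2 \mathbf{I}_N = \pmb{\Phi}_k^{-1}$. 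Plugging into the conditional-mean formula yields \eqref{eq:ChannelEstgk}. The error covariance $\mathbf{R}_k - p\tau_p\,\mathbf{R}_k\pmb{\Phi}_k\mathbf{R}_k$ and estimate covariance $p\tau_p\,\mathbf{R}_k\pmb{\Phi}_k\mathbf{R}_k$ then follow from the standard MMSE identities $\mathrm{Cov}(\mathbf{e}_k) = \mathrm{Cov}(\mathbf{g}_k) - \mathrm{Cov}(\mathbf{g}_k, \mathbf{y}_{pk})\mathrm{Cov}(\mathbf{y}_{pk})^{-1}\mathrm{Cov}(\mathbf{y}_{pk}, \mathbf{g}_k)$ and $\mathrm{Cov}(\hat{\mathbf{g}}_k) = \mathrm{Cov}(\mathbf{g}_k) - \mathrm{Cov}(\mathbf{e}_k)$, with independence again inherited from joint Gaussianity.

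Essentially no step is hard; this is a textbook application of Gaussian MMSE. The only mildly delicate point is the Rician case, where one must be careful to center both the channel and the observation before applying the linear-Gaussian formula, and to verify that the cross-terms involving LoS components vanish in the covariance computations because $\mathbf{g}_{k'} - \bar{\mathbf{g}}_{k'}$ is zero-mean and independent across $k'$. Beyond that, asserting independence (rather than mere uncorrelatedness) of $\hat g_{mk}$ and $e_{mk}$, and of $\hat{\mathbf{g}}_k$ and $\mathbf{e}_k$, requires explicitly invoking the Gaussian property so the lemma can be used later in the ergodic throughput analysis.
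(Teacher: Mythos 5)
Your proposal is correct and takes essentially the same route as the paper, whose proof is a one-line appeal to the standard Gaussian MMSE estimator of \cite{Kay1993a}; you simply write out the cross-covariance and observation-covariance computations that the paper leaves implicit. As a side benefit, your derivation yields $\sigma_a^2$ (not $\sigma_s^2$) in the denominator of $\gamma_{mk}$ and $\bar{\mathbf{y}}_{pk}=\sum_{k'\in\mathcal{P}_k}\sqrt{p\tau_p}\,\bar{\mathbf{g}}_{k'}$, which are the correct forms and expose minor typos in the lemma as stated.
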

\begin{proof}
The proof adopts the standard MMSE estimation \cite{Kay1993a} to the integrated system model and notations.
\end{proof}
Lemma~\ref{Lemma:Est} provides the concrete expressions of the channel estimates that are utilized for designing the combining coefficients to detect the desired signals under an arbitrary pilot reuse pattern. For the terrestrial links, the channel estimates between two users~$k$ and $k'$ sharing the same pilot signal unveil the following relationship
\begin{equation} \label{eq:Relation}
\hat{g}_{mk}/c_{mk} = \hat{g}_{mk'}/c_{mk'}, ~ \text{and hence,} ~ \gamma_{mk}/c_{mk}^2 = \gamma_{mk'}/c_{mk'}^2,
\end{equation}
which indicates that the network cannot differentiate the channel estimates of these two users since one is a scaling-up version of the other. This behavior is not explicitly observed in the space links under the spatial correlation at the satellite. However, the active users in the set $\mathcal{K}$ share the same matrix $\pmb{\Phi}_k$ and they are distinguished by the spatial covariance matrices. One way to mitigate the channel estimation errors is allowing each user to occupy its own pilot signal that leads to $\pmb{\Phi}_k \big(  p \tau_p \mathbf{R}_{k} + \sigma_s^2 \mathbf{I}_N \big)^{-1}$. Nevertheless, the orthogonal pilot assignment is impossible under a short coherence time with many available users. 
\vspace*{-0.25cm}
\section{Analysis of Ergodic Throughput and Uplink Data Transmission} \label{Sec:UplinkPerfAna}
This section provides the analysis of the throughput in the uplink data transmission under imperfect channel state information. A closed-form expression of the uplink throughput with MRC is then derived.  
\vspace*{-0.25cm}
\subsection{Uplink Data Transmission}
 Active users in set $\mathcal{Q}$ are allowed to access the network in the uplink data transmission such that a particular utility metric can be optimized with a finite radio resource. From this assumption, the received signal at the satellite, i.e., $\mathbf{y} \in \mathbb{C}^N$, and that of  AP~$m$, i.e., $y_m \in \mathbb{C}$ are defined as
\begin{align} 
\mathbf{y} &= \sum\nolimits_{k \in \mathcal{Q}} \sqrt{\rho_k} \mathbf{g}_k s_k  + \mathbf{w} \mbox{ and } y_m =  \sum\nolimits_{k \in \mathcal{Q}} \sqrt{\rho_k} g_{mk} s_k  + w_m, \label{eq:ReceiveSigAP}
\end{align}
where $\mathbf{w} \sim \mathcal{CN}(\mathbf{0}, \sigma_s^2 \mathbf{I}_N)$   is additive noise at the satellite system and $w_m \sim \mathcal{CN}(0,\sigma_a^2)$ is that of AP~$m$. From the received signals in \eqref{eq:ReceiveSigAP}, we will decode the desired signal $s_k$ sent by user~$k$, $\forall k$, by an advanced process with the two-layer decoding technique. The desired signal  transmitted from user~$k$ is decoded independently at the gateway, $tilde{s}_k = \mathbf{u}_k^H \mathbf{y} $,  and at each AP, $\tilde{s}_{mk} = u_{mk}^\ast y_m$ as follows
\begin{align}
& \tilde{s}_k = \sqrt{\rho_k} \mathbf{u}_k^H \mathbf{g}_k  s_k  + \sum\nolimits_{k' \in \mathcal{Q}, k'\neq k} \sqrt{\rho_{k'}} \mathbf{u}_k^H \mathbf{g}_{k'} s_{k'} + \mathbf{u}_k^H \mathbf{w},  \\
& \tilde{s}_{mk} =   \sqrt{\rho_k} u_{mk}^\ast g_{mk} s_k + \sum\nolimits_{k' \in \mathcal{Q}, k'\neq k} \sqrt{\rho_{k'}}  u_{mk}^\ast g_{mk'} s_{k'}  + w_m,
\end{align}
where $\mathbf{u}_k \in \mathbb{C}^{N}$ is the combining vector exploited to decode the desired signal sent from the satellite over the space link and  the combining coefficient exploited by AP~m is denoted by $u_{mk} \in \mathbb{C}$. All the decoded signals for user~$k$, i.e., $\tilde{s}_k$ and $\tilde{s}_{mk}, \forall m,$ will be combined at the central processing unit (CPU) as follows
\begin{multline} \label{eq:2ndsk}
\hat{s}_k  = \tilde{s}_k + \sum\nolimits_{m=1}^M   \tilde{s}_{mk} = \sqrt{\rho_k} \left( \mathbf{u}_k^H \mathbf{g}_k + \sum\nolimits_{m=1}^M u_{mk}^\ast g_{mk} \right) s_k  \\  +  \sum\nolimits_{k'\in \mathcal{Q}, k'\neq k} \sqrt{\rho_{k'}} \left(\mathbf{u}_k^H \mathbf{g}_{k'}  +  \sum\nolimits_{m=1}^M u_{mk}^\ast  g_{mk'} \right)s_{k'} 
 \\ +  \mathbf{u}_k^H \mathbf{w} + \sum\nolimits_{m=1}^M  u_{mk}^\ast w_m.
\end{multline}
The first term in  \eqref{eq:2ndsk} represents the received signal sent by the desired user~$k$, which inherits the diversity gain from the communication channels from both satellite and APs. The second term represents coherent and noncoherent  interference aggregated from all the other users gathered at the APs and satellite. The remaining terms are additive noise. We stress that the decoded signal in \eqref{eq:2ndsk} is a generalization of the related works in the literature, which can be reduced to either  ground or space communication by removing the corresponding parts.
\vspace*{-0.25cm}
\subsection{Uplink Ergodic Throughput}
\vspace*{-0.125cm}
When the number of antennas at the satellite and APs grows large sufficiently, the network can tackle the channel gain of the signal $s_k$ from the desired user~$k$ in \eqref{eq:2ndsk} as a constant. Hence, one can compute the uplink ergodic throughput of user~$k$ effectively. We now  introduce a new notation
\begin{equation} \label{eq:zkkprime}
z_{kk'} = \mathbf{u}_k^H \mathbf{g}_{k'}  +  \sum\nolimits_{m=1}^M  u_{mk}^\ast  g_{mk'},
\end{equation}
which contains the overall received channel information. For $k=k'$, the notation $z_{kk}$ denotes the desired channel at user~$k$. Otherwise, $z_{kk'}$ represents an interfering channel that degrades the received signal strength at the receiver. By exploiting \eqref{eq:zkkprime}, the decoded signal in \eqref{eq:2ndsk} is equivalent to as 
\begin{equation} \label{eq:Decodesig}
	\begin{split}
		 &\hat{s}_k  = \sqrt{\rho_k} z_{kk} s_k +   \sum\nolimits_{k'\in \mathcal{Q}, k'\neq k} \sqrt{\rho_{k'}} z_{kk'} s_{k'} + \mathbf{u}_k^H \mathbf{w} + \\
    &\sum\nolimits_{m=1}^M u_{mk}^\ast w_m = \sqrt{\rho_k} \mathbb{E} \{ z_{kk} \} s_k +  \sqrt{\rho_k} \left( z_{kk}  -  \mathbb{E} \{ z_{kk} \}  \right)s_k \\
    & +   \sum_{k'\in \mathcal{Q}, k'\neq k} \sqrt{\rho_{k'}} z_{kk'} s_{k'} +  \mathbf{u}_k^H \mathbf{w} + \sum_{m=1}^M  u_{mk}^\ast w_m,
	\end{split}
\end{equation}
where the first term in the last equality of \eqref{eq:Decodesig} involves the desired signal sent by user~$k$ equipped with a deterministic effective channel gain contributed from both the satellite and APs. The second term demonstrates the random fluctuation of the effective channel gain for a predetermined linear combining method, which is the so-called beamforming uncertainty. The remaining terms indicate mutual interference because of multiple users accessing the network simultaneously together with thermal noise. Owing to the so-called use-and-then-forget capacity bounding technique \cite{Chien2017a,massivemimobook}, the uplink ergodic throughput of user~$k$ is computed as
\begin{equation} \label{eq:Rkv1}
	R_k = B\left( 1 - \frac{\tau_p}{\tau_c} \right)  \log_2 ( 1 + \mathrm{SINR}_k ), \mbox{[Mb/s/Hz]},
\end{equation}
where $B$~[MHz] determines the operating bandwidth and $\mathrm{SINR}_k$ is the effective signal-to-interference-and-noise ratio (SINR) that is computed as in \eqref{eq:SINRk}. 
\begin{figure*}
\begin{equation} \label{eq:SINRk}
\mathrm{SINR}_k = \frac{\rho_k \big| \mathbb{E}\{ z_{kk}\} \big|^2 }{\sum_{k'\in \mathcal{Q}} \rho_{k'}  \mathbb{E}\{ |z_{kk'}|^2 \} - \rho_k \big| \mathbb{E}\{ z_{kk}\} \big|^2 +  \mathbb{E} \big\{ | \mathbf{u}_k^H \mathbf{w} |^2 \big\} + \sum_{m=1}^M  \mathbb{E} \big\{ | u_{mk}^\ast w_m |^2 \big\}  }
\end{equation}
\vspace{-0.5cm}
\end{figure*}
Note that the uplink ergodic throughput in \eqref{eq:Rkv1} can be applied for an arbitrary combining technique at satellite and APs. Though we can evaluate the expectation \eqref{eq:Rkv1} numerically, this approach requires various amounts of different realizations of random small-scale fading and shadow fading coefficients to attain the expectations numerically. Therefore, it may be burdensome for networks with low-cost hardware devices. By virtue of the fundamental massive MIMO properties, we now compute the closed-form solution to \eqref{eq:Rkv1} when the MRC combining method is deployed by the APs and the satellite as shown in the theorem below.
\begin{theorem} \label{Theorem:ClosedForm}
If the MRC method is exploited, the uplink ergodic throughput for user~$k$ in \eqref{eq:Rkv1} is computed in the closed form   as
\begin{equation} \label{eq:RateMRC}
R_k = B\left( 1 - \frac{\tau_p}{\tau_c} \right) \log_2( 1 + \mathrm{SINR}_k), \mbox{[Mbps]},
\end{equation}
where the effective SINR expression is 
\begin{equation} \label{eq:ClosedSINR}
 \mathrm{SINR}_k = \frac{\rho_k \left( \|\bar{\mathbf{g}}_k\|^2 +  p \tau_p \mathrm{tr}(\mathbf{R}_k \pmb{\Phi}_k \mathbf{R}_k)  +  \sum\nolimits_{m=1}^M \gamma_{mk} \right)^2}{  \mathsf{MI}_k + \mathsf{NO}_k},
\end{equation}
with the mutual interference, denoted by $\mathsf{MI}_k$,  and noise, denoted by $\mathsf{NO}_k$, given as follows
\begin{align}
&\mathsf{MI}_k =	\sum\nolimits_{k' \in \mathcal{P}_k \setminus \{k\}} \rho_{k'} \Big| \bar{\mathbf{g}}_{k}^H \bar{\mathbf{g}}_{k'} + p \tau_p   \mathrm{tr}(\mathbf{R}_{k'} \pmb{\Phi}_k \mathbf{R}_k) +   \sum\nolimits_{m=1}^M  \frac{c_{mk'}  }{c_{mk}  }\notag\\
& \times\gamma_{mk} \Big|^2 + \sum\nolimits_{k' \notin \mathcal{P}_k } \rho_{k'} |\bar{\mathbf{g}}_{k}^H  \bar{\mathbf{g}}_{k'} |^2  +  p \tau_p \sum\nolimits_{k' \in \mathcal{Q}} \rho_{k'}   \bar{\mathbf{g}}_{k'}^H \mathbf{R}_k \pmb{\Phi}_k \mathbf{R}_k  \bar{\mathbf{g}}_{k'}  \notag \\
&+ \sum\nolimits_{k'\in \mathcal{Q}} \rho_{k'} \bar{\mathbf{g}}_{k}^H \mathbf{R}_{k'} \bar{\mathbf{g}}_{k}   + p \tau_p \sum\nolimits_{k' \in \mathcal{Q} } \rho_{k'}   \mathrm{tr}( \mathbf{R}_{k'} \mathbf{R}_k \pmb{\Phi}_k \mathbf{R}_k ) \notag \\
& +  \sum\nolimits_{k' \in \mathcal{Q}} \sum\nolimits_{m=1}^M \rho_{k'}  \gamma_{mk} \beta_{mk'}, \label{eq:MIk}\\
& \mathsf{NO}_k =  \sigma_s^2 \|\bar{\mathbf{g}}_k\|^2 +  p \tau_p \sigma_s^2 \mathrm{tr}(\mathbf{R}_k \pmb{\Phi}_k \mathbf{R}_k ) + \sigma_a^2 \sum\nolimits_{m=1}^M \gamma_{mk}.
\end{align}
\end{theorem}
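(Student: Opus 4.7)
The plan is to substitute the MRC combiners $\mathbf{u}_k=\hat{\mathbf{g}}_k$ and $u_{mk}=\hat{g}_{mk}$ into \eqref{eq:SINRk} and evaluate every required moment using the MMSE statistics in Lemma~\ref{Lemma:Est}. A structural observation that shortens the whole computation is that the satellite channel $\mathbf{g}_k$, the ground channels $\{g_{mk}\}_{m=1}^{M}$ and all noise samples are statistically independent across these physically separate links, so any expectation mixing a satellite factor with a ground factor splits as a product of the two marginals. Consequently, the effective channel $z_{kk'}$ decomposes cleanly into a satellite auto-term, a ground auto-term, and one factorizable cross-term whose three pieces combine into a single squared magnitude precisely when each of them has a non-zero mean.

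For the numerator and the noise, the decomposition $\mathbf{g}_k=\hat{\mathbf{g}}_k+\mathbf{e}_k$ with $\hat{\mathbf{g}}_k$ independent of $\mathbf{e}_k$ (Lemma~\ref{Lemma:Est}) gives $\mathbb{E}\{\hat{\mathbf{g}}_k^H\mathbf{g}_k\}=\|\bar{\mathbf{g}}_k\|^2+p\tau_p\,\mathrm{tr}(\mathbf{R}_k\pmb{\Phi}_k\mathbf{R}_k)$ directly from the distribution of $\hat{\mathbf{g}}_k$, and likewise $\mathbb{E}\{\hat{g}_{mk}^\ast g_{mk}\}=\gamma_{mk}$ per AP. Summing and squaring produces the numerator of \eqref{eq:ClosedSINR}; the noise term $\mathsf{NO}_k$ then follows immediately from the independence of $\mathbf{w}$ and $w_m$ from the estimates, yielding $\sigma_s^2\mathbb{E}\{\|\hat{\mathbf{g}}_k\|^2\}+\sigma_a^2\sum_{m}\gamma_{mk}$.

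For the interference, I would expand $\mathbb{E}\{|z_{kk'}|^2\}$ into its satellite auto-term, its AP auto-term and its real cross-term for every $k'\in\mathcal{Q}\setminus\{k\}$. In the AP auto-term, writing $g_{mk'}=\hat{g}_{mk'}+e_{mk'}$ and invoking the scaling identity \eqref{eq:Relation} produces the coherent block $|\sum_{m}(c_{mk'}/c_{mk})\gamma_{mk}|^2$ when $k'\in\mathcal{P}_k$, while independence of the other contributions collapses them to the variance sum $\sum_{m}\gamma_{mk}\beta_{mk'}$. In the satellite auto-term, substituting $\hat{\mathbf{g}}_k$ from \eqref{eq:ChannelEstgk} together with $\mathbf{g}_{k'}=\bar{\mathbf{g}}_{k'}+\tilde{\mathbf{g}}_{k'}$ and using the standard fourth-order Gaussian identities yields (i) the deterministic $|\bar{\mathbf{g}}_k^H\bar{\mathbf{g}}_{k'}|^2$ for every $k'$, (ii) the coherent pilot-contamination $p\tau_p\,\mathrm{tr}(\mathbf{R}_{k'}\pmb{\Phi}_k\mathbf{R}_k)$ only when $k'\in\mathcal{P}_k$, and (iii) the covariance-type residuals $p\tau_p\,\bar{\mathbf{g}}_{k'}^H\mathbf{R}_k\pmb{\Phi}_k\mathbf{R}_k\bar{\mathbf{g}}_{k'}$, $\bar{\mathbf{g}}_k^H\mathbf{R}_{k'}\bar{\mathbf{g}}_k$ and $p\tau_p\,\mathrm{tr}(\mathbf{R}_{k'}\mathbf{R}_k\pmb{\Phi}_k\mathbf{R}_k)$ that survive for all $k'\in\mathcal{Q}$. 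Summing over $k'\in\mathcal{Q}$, subtracting $\rho_k|\mathbb{E}\{z_{kk}\}|^2$ to cancel the signal squared, and merging the three coherent contributions (satellite LoS cross, satellite pilot contamination, AP pilot contamination) inside a single $|\cdot|^2$ for $k'\in\mathcal{P}_k\setminus\{k\}$ reproduces $\mathsf{MI}_k$ in \eqref{eq:MIk} line by line.

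The main obstacle is the satellite auto-term under spatially correlated Rician fading: one must carefully isolate the LoS means and the zero-mean correlated fluctuations inside both $\hat{\mathbf{g}}_k$ and $\mathbf{g}_{k'}$, identify which combinations survive as squared means (the coherent pilot-contamination block) versus as second-order statistics (the trace residuals), and then verify that the cross-factorization with the AP auto-term aligns these three coherent contributions inside the single $|\cdot|^2$ of \eqref{eq:MIk}. Everything else is routine bookkeeping under the space/ground independence together with Lemma~\ref{Lemma:Est}.
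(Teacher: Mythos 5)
Your proposal is correct and follows essentially the same route as the paper's proof: substitute the MRC combiners into \eqref{eq:SINRk}, exploit the independence of the space and ground links to factorize the cross-terms of $z_{kk'}$, evaluate the satellite auto-term with the correlated/independent Gaussian fourth-moment identities (Lemmas~\ref{Lemma:Supp1} and \ref{Lemma:Moment4v1}), use the scaling relation \eqref{eq:Relation} for the AP auto-term when $k'\in\mathcal{P}_k$, and merge the three nonzero-mean coherent pieces into the single squared magnitude of \eqref{eq:MIk}. The only cosmetic difference is that the paper further splits each auto-term into estimate-plus-error components ($a_{kk'},\tilde{a}_{kk'},b_{kk'},\tilde{b}_{kk'}$) before taking expectations, which is the same bookkeeping you describe.
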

\begin{proof}
See Appendix~\ref{Appendix:ClosedForm} for the detailed proof.
\end{proof}
The uplink ergodic throughput obtained for user~$k$ in Theorem~\ref{Theorem:ClosedForm} depends on channel statistics only. Both the LoS components and the spatial correlation constructively contribute to upgrading the strength of desired signal $s_k$ sent from this user as pointed out in the numerator of \eqref{eq:ClosedSINR} thanks to the presence of the satellite. It also demonstrates the effectiveness of distributed APs over the coverage area generating the benefits of spatial diversity from the summation of $M$ terms. The satellite can be connected to a terrestrial network such that data processing at the CPU coherently results in a order of $M^2 N^2$ for the array gain. The denominator of \eqref{eq:ClosedSINR} shows the severity of mutual interference and additive noise degrading transmission performance. More specifically, the coherent interference can grow up with the quadratic order of the array gain from both the number of APs and satellite antennas, so the achievable throughput of each user is bounded from above at the limiting regime, i.e., $M, N \rightarrow \infty$. Without the presence of the satellite, the effective SINR expression in \eqref{eq:ClosedSINR} is simplified to the following expression, which is denoted by $\widehat{\mathrm{SINR}}_k$ as in \eqref{eq:SINRkGroundOnly}.
\begin{figure*}
\begin{equation} \label{eq:SINRkGroundOnly}
\widehat{\mathrm{SINR}}_k = \frac{\left|\sum\nolimits_{m=1}^M  \gamma_{mk} \right|^2}{	\sum\nolimits_{k' \in \mathcal{P}_k \setminus \{k\}} \rho_{k'} \left|  \sum\nolimits_{m=1}^M  \frac{c_{mk'} }{c_{mk}  }  \gamma_{mk} \right|^2 +  \sum\nolimits_{k' \in \mathcal{Q}} \sum\nolimits_{m=1}^M \rho_{k'}   \gamma_{mk} \beta_{mk'} + \sigma_a^2 \sum\nolimits_{m=1}^M \gamma_{mk}}
\end{equation}
\hrule
\vspace{-0.5cm}
\end{figure*}
which unveils that the received signal is only strengthened by  the macro-diversity gain from the distributed APs and the coherent gain of jointly processing the received data at the CPU. Specifically, the array gain is now only in the order of $M^2$. Nevertheless, the additive noise in \eqref{eq:SINRkGroundOnly} is less severe than that of in \eqref{eq:ClosedSINR} as the absence of the satellite. Both the SINR expressions in \eqref{eq:ClosedSINR} and \eqref{eq:SINRkGroundOnly} offers benefits in evaluating the system performance, e.g., spectral efficiency,  and reducing the computational complexity issue of updating the resource allocation algorithms thanks to the stability of the statistical channel information over many coherence intervals. Besides, if the space links are only available, we can reformulate the effective SINR expression in \eqref{eq:ClosedSINR} to as
\begin{equation} \label{eq:SINRSpaceOnly}
\widetilde{\mathrm{SINR}}_k = \frac{\rho_k \left( \|\bar{\mathbf{g}}_k\|^2 +  p \tau_p \mathrm{tr}(\mathbf{R}_k \pmb{\Phi}_k \mathbf{R}_k) \right)^2}{\widetilde{\mathsf{MI}}_k +  \widetilde{\mathsf{NO}}_k},
\end{equation}
where the following definitions hold for mutual interference and noise as
\begin{align}
	& \mathsf{MI}_k =	\sum\nolimits_{k' \in \mathcal{P}_k \setminus \{k\}} \rho_{k'} \left| \bar{\mathbf{g}}_{k}^H \bar{\mathbf{g}}_{k'} + p \tau_p   \mathrm{tr}(\mathbf{R}_{k'} \pmb{\Phi}_k \mathbf{R}_k)  \right|^2 + \notag \\
 & \sum\nolimits_{k' \notin \mathcal{P}_k } \rho_{k'} |\bar{\mathbf{g}}_{k}^H  \bar{\mathbf{g}}_{k'} |^2  +  p \tau_p \sum\nolimits_{k' \in \mathcal{Q}} \rho_{k'}   \bar{\mathbf{g}}_{k'}^H \mathbf{R}_k \pmb{\Phi}_k \mathbf{R}_k  \bar{\mathbf{g}}_{k'} +   \notag \\
	&  \sum\nolimits_{k'\in \mathcal{Q}} \rho_{k'} \bar{\mathbf{g}}_{k}^H \mathbf{R}_{k'} \bar{\mathbf{g}}_{k}   + p \tau_p \sum\nolimits_{k' \in \mathcal{Q} } \rho_{k'}   \mathrm{tr}( \mathbf{R}_{k'} \mathbf{R}_k \pmb{\Phi}_k \mathbf{R}_k ), \label{eq:MIkSpace} \\
	& \mathsf{NO}_k =  \sigma_s^2 \|\bar{\mathbf{g}}_k\|^2 +  p \tau_p \sigma_s^2 \mathrm{tr}(\mathbf{R}_k \pmb{\Phi}_k \mathbf{R}_k ),
\end{align}
which demonstrates the contributions of the satellite to the uplink throughput of user~$k$ only. Thanks to a strong channel gain with the LoS components, the received signal may be significantly enhanced as shown in the numerator of \eqref{eq:SINRSpaceOnly} with an array gain in the quadratic order of the number of satellite antennas. However, the LoS components also interfere each other in the same order as shown in \eqref{eq:MIkSpace}. The additive noise in \eqref{eq:SINRSpaceOnly} is less serve than in the joint satellite-space communication systems. 
\section{Sum Ergodic Throughput Optimization} \label{Sec:SumRate}
This section formulates and solves a sum ergodic throughput maximization problem constrained by the finite transmit power of the users. Furthermore,  active users are explicitly defined from the stationary solution of the optimized power coefficients.
\subsection{Problem Formulation}
In multiple access scenarios, the transmit power allocated to each user is a sophisticated function of the traffic load and the users' locations; however, the transmit power coefficients are limited by the peak radiated power that is determined by hardware configuration references. One of the critical tasks for the future satellite-terrestrial cooperative networks is to maximize the total ergodic  throughput under the power constraints as follows 
\begin{equation} \label{Prob:NMSEk}
	\begin{aligned}
		& \underset{\{ \rho_k \geq 0 \}, \mathcal{Q} }{\mathrm{maximize}}
		&&  \sum\nolimits_{k \in \mathcal{Q}} R_{k} \\
		& \,\,\mathrm{subject \,to}
		&&  \rho_k \leq P_{\max,k}, \forall k,\\
		&&& \mathcal{Q} \subseteq \mathcal{K},
	\end{aligned}
\end{equation}
where $P_{\max,k}$ represents the maximum transmit power that each data symbol can be assigned by user~$k$. We stress that problem~\eqref{Prob:NMSEk} can be applied for the network with arbitrary combining methods once the throughput in \eqref{eq:Rkv1} with the SINR value in \eqref{eq:SINRk} is exploited.  This paper focuses on the sum throughput optimization with the closed-form expression in \eqref{eq:RateMRC} since the optimal power coefficients are a long-term solution that only needs to be updated as the channel statistics vary. Problem~\eqref{Prob:NMSEk} has the continuous objective function together with a compact feasible domain obtained for a fixed active user set $\mathcal{Q}$. Consequently, the globally optimal solution exists and might be obtained if all the possibilities of the active user set $\mathcal{Q}$ are investigated. Combined with the inherent non-convex objective function, the global optimum is, unfortunately, nontrivial to obtain, especially for large-scale systems with many APs and users. To cope with this matter, the throughput in \eqref{eq:RateMRC} should be attained by the signal transmission of an analogous single-input single-output (SISO) system as follows
\begin{equation}
	\tilde{y}_k = \tilde{\rho}_k \left( \|\bar{\mathbf{g}}_k\|^2 +  p \tau_p \mathrm{tr}(\mathbf{R}_k \pmb{\Phi}_k \mathbf{R}_k)  +  \sum\nolimits_{m=1}^M \gamma_{mk} \right) x_k + \tilde{w}_k,
\end{equation}
where $\tilde{\rho}_k = \sqrt{\rho}_k$ and $x_k$ is the transmitted data symbol with $\mathbb{E}\{ x_k^2 \} = 1$. The additive noise $\tilde{w}_k$ is distributed as $\tilde{w}_k \sim \mathcal{N}(0, \delta_k)$ with $\delta_k = \mathsf{CI}_k +  \mathsf{NI}_k + \mathsf{NO}_k$ and $\mathcal{N}(\cdot, \cdot)$ representing a Gaussian distribution. The network utilizes a combining coefficient $v_k \in \mathbb{R}$ to detect the desired signal from user~$k$ as
\begin{multline} \label{eq:DecodedSig}
	\hat{x}_k = v_k \tilde{y}_k = \tilde{\rho}_k v_k  \left( \|\bar{\mathbf{g}}_k\|^2 +  p \tau_p \mathrm{tr}(\mathbf{R}_k \pmb{\Phi}_k \mathbf{R}_k)  +  \sum\nolimits_{m=1}^M \gamma_{mk} \right)\\ \times x_k + v_k \tilde{w}_k.
\end{multline}
By exploiting the decoded signal in \eqref{eq:DecodedSig}, the mean square error (MSE) of our suggested decoding process is formulated as
\begin{equation}
\begin{split}
& e_k = \mathbb{E}\{ (\hat{x}_k  - x_k )^2 \} \stackrel{(a)}{=}   v_k^2  \delta_k + \\
& \left(\tilde{\rho}_k  v_k \left( \|\bar{\mathbf{g}}_k\|^2 +  p \tau_p \mathrm{tr}(\mathbf{R}_k \pmb{\Phi}_k \mathbf{R}_k )  +  \sum\nolimits_{m=1}^M \gamma_{mk}\right) -1 \right)^2 ,
\end{split}
\end{equation}
where $(a)$ is attained by using \eqref{eq:DecodedSig}. After that, problem~\eqref{Prob:NMSEk} is equivalently converted to the sum MSE optimization problem as follows
\begin{equation} \label{Prob:NMSEkE1}
	\begin{aligned}
		& \underset{\{ \alpha_k \geq 0, \tilde{\rho}_k \geq 0, v_k \}, \mathcal{Q} }{\mathrm{minimize}}
		&&  \sum\nolimits_{k \in \mathcal{Q}} \alpha_{k} e_k - \ln(\alpha_k) \\
		& \,\,\mathrm{subject \,to}
		&& \tilde{\rho}_k^2 \leq P_{\max,k}, \forall k, \\
		&&& \mathcal{Q} \subseteq \mathcal{K},
	\end{aligned}
\end{equation}
in the way that they share the same optimal transmit power solution, say $\tilde{\rho}_k^2  = \rho_k, \forall k,$ at the global optimum, with the proof straightforwardly attained by utilizing the similar methodology as in \cite{van2018large}. Compared to the original problem, we have simplified the complexity matter since the sum MSE optimization is element-wise convex in the sense that if only one variable is considered, at the same time, the remaining are fixed, \eqref{Prob:NMSEkE1} becomes a convex problem for a predetermined set of active users. This attractive property should be exploited to attain a stationary solution to problem~\eqref{Prob:NMSEk} iteratively.
  
\subsection{Iterative Algorithm} \label{SubSec:IterAl}
We first tackle the discrete variable in problem~\eqref{Prob:NMSEkE1} by observing that $\mathcal{Q}$ is explicitly defined when the optimal solution to the transmit power coefficients is available. Alternatively, the throughput of each active user remains if we add the impacts of inactive users into its expression due to the zero transmit powers. Subsequently, one can set $\mathcal{Q} = \mathcal{K}$ at the beginning and reformulate \eqref{Prob:NMSEkE1} into an equivalent form as 
\begin{equation} \label{Prob:NMSEkEv2}
	\begin{aligned}
		& \underset{\{ \alpha_k \geq 0, \tilde{\rho}_k \geq 0, u_k \} }{\mathrm{minimize}}
		&&  \sum\nolimits_{k \in \mathcal{K}} \alpha_{k} e_k - \ln(\alpha_k) \\
		& \,\,\mathrm{subject \,to}
		&& \tilde{\rho}_k^2 \leq P_{\max,k}, \forall k.
	\end{aligned}
\end{equation}
The feasible set of problem~\eqref{Prob:NMSEkEv2} is continuous, and the combinatorial issue is completely solved. We can now exploit the element-wise convexity to find a local optimum. For such, the Lagrangian function to problem~\eqref{Prob:NMSEkE1} is first formulated as
\begin{multline}
\label{eq:LagrangianFunc}
\mathcal{L} =  \sum_{k'' \in \mathcal{K}} (\alpha_{k''} e_{k''} - \ln(\alpha_{k''}))  + \sum_{k''\in \mathcal{K} } \mu_{k''}  (\omega_{k''} \delta_{k''} - \tilde{\rho}_{k''}^2 a_{k''}^2) \\ + \sum_{k \in \mathcal{K}} \lambda_{k''} ( \tilde{\rho}_{k''}^2 - P_{\max,k''} ),
\end{multline}
where $\mu_k$ and $\lambda_k$, for all $k$, are the Lagrange multipliers designed for the SINR constraints and the limited power budget constraints, respectively. We now provide an algorithm to attain a stationary solution to problem~\eqref{Prob:NMSEk} by alternately updating the subsets of different optimization variables as the policy presented in Theorem~\ref{Theorem:ClosedFormSol}.
\begin{theorem} \label{Theorem:ClosedFormSol}
From an initial point $\{\tilde{\rho}_k^{(0)} \}$ in the feasible domain, a stationary solution to problem~\eqref{Prob:NMSEkE1} is achieved by iteratively updating $\{ v_k,\alpha_k, \tilde{\rho}_k\}$. At iteration~$n$, those optimization variables are updated in the following order: 
\begin{itemize}
\item The $v_k$ variables, $\forall k$, are updated as in \eqref{eq:vk}. 
\begin{figure*}
\begin{equation} \label{eq:vk}
v_{k,(n)} = \frac{\tilde{\rho}_{k,(n-1) }\left( \|\bar{\mathbf{g}}_k\|^2 +  p \tau_p \mathrm{tr}(\mathbf{R}_k \pmb{\Phi}_k \mathbf{R}_k )  +  \sum\nolimits_{m=1}^M \gamma_{mk}\right) }{ \tilde{\rho}_{k,(n-1)}^2 \left(\|\bar{\mathbf{g}}_k\|^2 +  p \tau_p \mathrm{tr}(\mathbf{R}_k \pmb{\Phi}_k \mathbf{R}_k )  +  \sum\nolimits_{m=1}^M \gamma_{mk} \right)^2 + \delta_{k,(n-1)}}
\end{equation}
\vspace{-0.8cm}
\end{figure*}
where $\delta_{k,(n-1)}$ is computed as in \eqref{eq:Deltak}.
\begin{figure*}
\begin{multline} \label{eq:Deltak}
\delta_{k,(n-1)} =	\sum\nolimits_{k' \in \mathcal{P}_k \setminus \{k\}} \tilde{\rho}_{k',(n-1)}^2 \left| \bar{\mathbf{g}}_{k}^H \bar{\mathbf{g}}_{k'} + p \tau_p   \mathrm{tr}(\mathbf{R}_{k'} \pmb{\Phi}_k \mathbf{R}_k) +   \sum\nolimits_{m=1}^M  \frac{c_{mk'}  }{c_{mk}  }  \gamma_{mk} \right|^2 +  \sum\nolimits_{k' \notin \mathcal{P}_k } \tilde{\rho}_{k',(n-1)}^2 |\bar{\mathbf{g}}_{k}^H  \bar{\mathbf{g}}_{k'} |^2  + \\ \sum\nolimits_{k' \in \mathcal{K}} \tilde{\rho}_{k',(n-1)}^2  p \tau_p \bar{\mathbf{g}}_{k'}^H \mathbf{R}_k \pmb{\Phi}_k \mathbf{R}_k   \bar{\mathbf{g}}_{k'} + \sum\nolimits_{k' \in \mathcal{K}} \tilde{\rho}_{k',(n-1)}^2 \bar{\mathbf{g}}_{k}^H \mathbf{R}_{k'} \bar{\mathbf{g}}_{k}    + p \tau_p \sum\nolimits_{k' \in \mathcal{K} } \tilde{\rho}_{k',(n-1)}^2   \mathrm{tr}( \mathbf{R}_{k'}\mathbf{R}_k \pmb{\Phi}_k \mathbf{R}_k ) 
 +  \\ \sum\nolimits_{k' \in \mathcal{K}} \sum\nolimits_{m=1}^M \tilde{\rho}_{k',(n-1)}^2  \gamma_{mk} \beta_{mk'}+ \sigma_s^2 \|\bar{\mathbf{g}}_k\|^2 +   p \tau_p \sigma_s^2 \mathrm{tr}(\mathbf{R}_k \pmb{\Phi}_k \mathbf{R}_k) 
 + \sigma_a^2 \sum\nolimits_{m=1}^M \gamma_{mk}
\end{multline}
\hrule
\vspace{-0.5cm}
\end{figure*}
\item The $\alpha_k$ variables, $\forall k$, are updated as
\begin{equation} \label{eq:alphak}
\alpha_{k,(n)} = 1/e_{k,(n)},
\end{equation}
where $e_{k,(n)}$ is computed based on $\{ \tilde{\rho}_{k,(n-1)}\}$ and $\{v_{k,(n)}\}$ as
\begin{multline} \label{eq:ekn}
e_{k,(n)} = \Big(\tilde{\rho}_{k,(n-1)} v_{k,(n)} \big( \|\bar{\mathbf{g}}_k\|^2 +  p \tau_p \mathrm{tr}(\mathbf{R}_k \pmb{\Phi}_k \mathbf{R}_k )  + \\  \sum\nolimits_{m=1}^M \gamma_{mk} \big)-1 \Big)^2 + v_{k,(n)}^2  \delta_{k,(n-1)}.
\end{multline}
\item The $\tilde{\rho}_k$ variables, $\forall k$, are updated as
\begin{equation} \label{eq:rhoktilde}
	\tilde{\rho}_k = \min( \bar{\rho}_{k,(n)}, \sqrt{P_{\max,k}} ),
\end{equation}
where $\bar{\rho}_{k,(n)}$ is computed  as
\begin{multline} \label{eq:barrhok}
	\bar{\rho}_{k,(n)} =   \left( \|\bar{\mathbf{g}}_k\|^2 +  p \tau_p \mathrm{tr}(\mathbf{R}_k \pmb{\Phi}_k \mathbf{R}_k )  +  \sum\nolimits_{m=1}^M \gamma_{mk} \right) \times \\
 \alpha_{k,(n)}  v_{k,(n)} / t_{k,(n)},
\end{multline}
with $t_{k,(n)}$  defined as 
\begin{multline} \label{eq:tkn}
t_{k,(n)} =   \alpha_{k,(n)} \Big( \|\bar{\mathbf{g}}_k\|^2 +  p \tau_p \mathrm{tr}(\mathbf{R}_k \pmb{\Phi}_k \mathbf{R}_k )  +  \sum\nolimits_{m=1}^M \gamma_{mk} \Big)^2 \\
 \times v_{k,(n)}^2  +  \sum\nolimits_{k'' \in \mathcal{K}}  \alpha_{k'',(n)} v_{k'',(n)}^2 \times \\
	 \Big(  p \tau_p \bar{\mathbf{g}}_{k}^H \mathbf{R}_{k''} \pmb{\Phi}_{k''} \mathbf{R}_{k''}  \bar{\mathbf{g}}_{k} +  \bar{\mathbf{g}}_{k''}^H \mathbf{R}_{k} \bar{\mathbf{g}}_{k''}  +  p \tau_p   \mathrm{tr}( \mathbf{R}_{k} \mathbf{R}_{k''} \pmb{\Phi}_{k''} \mathbf{R}_{k''}  ) \\ +  \sum\nolimits_{m=1}^M  \gamma_{mk''} \beta_{mk} \Big) +  \sum\nolimits_{k'' \notin \mathcal{P}_k }  \alpha_{k'', (n)}  v_{k'',(n)}^2  |\bar{\mathbf{g}}_{k''}^H \bar{\mathbf{g}}_k|^2  \\
  +   \sum\nolimits_{k'' \in \mathcal{P}_{k} \setminus \{k\} } \alpha_{k'',(n)} v_{k'',(n)}^2  \Big| \bar{\mathbf{g}}_{k''}^H \bar{\mathbf{g}}_{k} + p \tau_p   \mathrm{tr}(\mathbf{R}_{k} \pmb{\Phi}_{k''} \mathbf{R}_{k''}) \\ +   \sum\nolimits_{m=1}^M  \frac{c_{mk}  }{c_{mk''}  }  \gamma_{mk''} \Big|^2.
\end{multline}
\end{itemize}
If we denote the fixed point solution attained by the above iterative algorithm as $\{ v_k^{\ast}, \alpha_k^{\ast}, \tilde{\rho}_k^{\ast} \}$, then $\{ \rho_k^\ast \}$ is a stationary solution to problem~\eqref{Prob:NMSEk}.
\end{theorem}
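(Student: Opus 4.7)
The plan is to derive each of the three closed-form updates in \eqref{eq:vk}, \eqref{eq:alphak}, and \eqref{eq:rhoktilde} as the unique minimizer of the Lagrangian $\mathcal{L}$ in \eqref{eq:LagrangianFunc} with respect to one block of variables while the other two blocks are held fixed, and then invoke block coordinate descent together with the WMMSE--rate equivalence referenced via \cite{van2018large} to conclude that any fixed point of the scheme is a stationary solution of the original problem~\eqref{Prob:NMSEk}.

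First I would verify the $v_k$ update. Since $e_k$ is an unconstrained quadratic in $v_k$ with strictly positive leading coefficient, the first-order condition $\partial \mathcal{L}/\partial v_k = 0$ admits a unique minimizer that, after collecting the effective signal gain $\|\bar{\mathbf{g}}_k\|^2 + p\tau_p \mathrm{tr}(\mathbf{R}_k \pmb{\Phi}_k \mathbf{R}_k) + \sum_{m} \gamma_{mk}$ and the aggregated interference-plus-noise $\delta_k$ assembled from Theorem~\ref{Theorem:ClosedForm}, reduces to \eqref{eq:vk} with $\delta_{k,(n-1)}$ as in \eqref{eq:Deltak}. For $\alpha_k$, the term $\alpha_k e_k - \ln \alpha_k$ is strictly convex on $\alpha_k > 0$ with stationary condition $e_k - 1/\alpha_k = 0$, yielding \eqref{eq:alphak} and \eqref{eq:ekn}. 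For $\tilde{\rho}_k$, the Lagrangian restricted to $\tilde{\rho}_k \ge 0$ under $\tilde{\rho}_k^2 \le P_{\max,k}$ is a quadratic in $\tilde{\rho}_k$ whose unconstrained stationary point produces $\bar{\rho}_{k,(n)}$ in \eqref{eq:barrhok}--\eqref{eq:tkn}; complementary slackness on the power constraint together with dual feasibility then forces the projection $\min(\bar{\rho}_{k,(n)}, \sqrt{P_{\max,k}})$ in \eqref{eq:rhoktilde}. Each of these three identities is a routine differentiation of $\mathcal{L}$ once the quadratic dependence of $e_k$ on the relevant block is made explicit.

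Next I would establish convergence of the sequence $\{(\alpha_k^{(n)}, v_k^{(n)}, \tilde{\rho}_k^{(n)})\}$. Since each subproblem is strictly convex in its block with a unique minimizer, the objective of \eqref{Prob:NMSEkEv2} is monotonically non-increasing along the sweeps, and it is bounded below on the compact feasible set, so the objective sequence converges and the iterates cluster at points that satisfy the block-wise KKT conditions of \eqref{Prob:NMSEkEv2}. Transporting such a point back to the original formulation relies on the WMMSE identity: jointly optimizing $\alpha_k$ and $v_k$ in $\alpha_k e_k - \ln \alpha_k$ yields $-\ln e_k^{\mathrm{mmse}}$, and a standard algebraic manipulation gives $1/e_k^{\mathrm{mmse}} = 1 + \mathrm{SINR}_k$, so the reformulated objective differs from $\sum_{k} R_k$ only by an affine transformation. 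This establishes a one-to-one correspondence of stationary points via $\rho_k^\star = (\tilde{\rho}_k^\star)^2$, and the active set is recovered as $\mathcal{Q} = \{ k : \rho_k^\star > 0 \}$, with inactive users emerging automatically whenever the unconstrained $\bar{\rho}_{k,(n)}$ in \eqref{eq:barrhok} is driven to zero by strong interference.

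The principal obstacle lies in the last step: carefully aligning the KKT multipliers across the square-root reparameterization $\tilde{\rho}_k = \sqrt{\rho_k}$, so that stationarity in $\tilde{\rho}_k$ translates into stationarity in $\rho_k$ without spurious behavior at the boundary $\rho_k^\star = 0$, and verifying that block-wise KKT is sufficient despite the fact that \eqref{Prob:NMSEkEv2} is only element-wise (not jointly) convex. The chain-rule computation is direct but must be carried out together with a careful complementary-slackness analysis of the power constraint in \eqref{eq:LagrangianFunc} to ensure that the multipliers $\lambda_k$ and $\mu_k$ produced by the AO scheme coincide with those satisfying the KKT system of \eqref{Prob:NMSEk}.
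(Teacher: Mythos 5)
Your proposal is correct and follows essentially the same route as the paper: each update is obtained by setting the first-order derivative of the Lagrangian \eqref{eq:LagrangianFunc} to zero block-by-block (with complementary slackness on the power constraint yielding the projection in \eqref{eq:rhoktilde}), and the fixed point is identified as a stationary solution via the WMMSE--rate equivalence and the compactness of the feasible set, which the paper delegates to \cite{van2018large}. The extra care you flag about the $\tilde{\rho}_k=\sqrt{\rho_k}$ reparameterization and block-wise KKT is a reasonable refinement of the same argument rather than a different approach.
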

\begin{proof}
See Appendix~\ref{Appendix:ClosedFormSol} for the detailed proof.
\end{proof}
 The proposed iterative approach to obtain a stationary solution to the joint power allocation and active user set is summarized in Algorithm~\ref{Algorithm1}. For a given power coefficients, say $ 0 \leq \rho_{k, (0)} \leq P_{\max, k}, \forall k$,  in the feasible domain, we compute the related optimization  variables $\tilde{\rho}_{k,(0)} = \tilde{\rho}_{k,(0)}, \forall k$. In iteration~$n$, the beamforming variables $v_{k,(n)}, \forall k,$ are updated by exploiting the closed-form expression in \eqref{eq:vk} with the square root of the  coefficients from the previous iteration and $\delta_{k,(n-1)}$ computed as in \eqref{eq:Deltak}. After that, the weighted variables $\alpha_{k,(n)}, \forall k,$ are updated by exploiting the closed-form expression in \eqref{eq:alphak} with $e_{k,(n)}$ computed by as in \eqref{eq:ekn}. Algorithm~\ref{Algorithm1} then updates the optimization variables $\tilde{\rho}_{k,(n)}, \forall k,$ by utilizing \eqref{eq:rhoktilde} with $\bar{\rho}_{k,(n)}$ given in \eqref{eq:barrhok} and $t_{k,(n)}$ given in \eqref{eq:tkn}. The CPU can terminate Algorithm~\ref{Algorithm1} when the total throughput has a small variation between the two consecutive iterations as follows 
\begin{equation} \label{eq:StoppingCriterion}
\left| \sum\nolimits_{k \in \mathcal{K}} R_{k,(n)}  - \sum\nolimits_{k \in \mathcal{K}} R_{k,(n-1)} \right| \leq \epsilon.
\end{equation}
We stress that Theorem~\ref{Theorem:ClosedFormSol} gives twofold: First, from an initial point of the power domain, the proposed iterative algorithm will converge to a stationary solution of problem~\eqref{Prob:NMSEkEv2} since each  optimization variable is computed in closed form based on the first-order derivative of the Lagrangian function while the others are fixed. Second, it is a low computational complexity design where all the optimization variables are computed in closed form. Consequently, the proposed iterative algorithm enables joint transmit power control and scheduled user optimization for integrated satellite-terrestrial cell-free massive MIMO systems with many APs and users.
\begin{corollary} 
From the stationary solution $\{ \tilde{\rho}_k^\ast \}$, we set $\rho_k^\ast = (\tilde{\rho}_k^\ast)^2$, for all $k$, and the optimized scheduled user set $\mathcal{Q}^{\ast}$ is explicitly defined as
\begin{equation}
	\mathcal{Q}^\ast = \{  k | \rho_k^\ast > 0, k \in \mathcal{K} \},
\end{equation}
and the unscheduled user set is formulated as $	\bar{\mathcal{Q}}^\ast = \mathcal{K} \setminus \mathcal{Q}^\ast.$
\end{corollary}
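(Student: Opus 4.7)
The plan is to establish the claim by showing that the characterization of $\mathcal{Q}^\ast$ is a direct consequence of the equivalence between the original combinatorial problem~\eqref{Prob:NMSEk} and the reformulated continuous problem~\eqref{Prob:NMSEkEv2} used to derive Theorem~\ref{Theorem:ClosedFormSol}. First, I would examine the closed-form SINR expression \eqref{eq:ClosedSINR}: the numerator is proportional to $\rho_k$, and every interference contribution in $\mathsf{MI}_k$ is weighted by $\rho_{k'}$ for some $k' \in \mathcal{Q}$. Thus, substituting $\rho_{k'} = 0$ makes user~$k'$ disappear from both its own throughput $R_{k'}$, which collapses to zero, and from the interference perceived by every other user. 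Consequently, enlarging $\mathcal{Q}$ from $\{k : \rho_k > 0\}$ to the full set $\mathcal{K}$ by padding with zero-power users leaves the sum throughput $\sum_{k \in \mathcal{Q}} R_k$ invariant.

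Second, I would argue that this observation induces a value-preserving correspondence between feasible points of \eqref{Prob:NMSEk} and \eqref{Prob:NMSEkEv2}. Given any feasible $(\{\rho_k\}, \mathcal{Q})$ of \eqref{Prob:NMSEk}, extend $\rho_k = 0$ for $k \in \mathcal{K} \setminus \mathcal{Q}$ to obtain a feasible point of the continuous problem with the same objective. Conversely, given a feasible $\{\rho_k\}$ of \eqref{Prob:NMSEkEv2}, declaring $\mathcal{Q} = \{k \in \mathcal{K} : \rho_k > 0\}$ yields a feasible point of \eqref{Prob:NMSEk} with identical objective value since, by the first step, zero-power users contribute nothing to the sum. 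This bijection preserves objective values and feasibility on both sides, so stationarity in \eqref{Prob:NMSEkEv2} transports to a valid scheduling-plus-power pair for \eqref{Prob:NMSEk}.

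Finally, I would apply these two observations to the fixed point $\{\tilde{\rho}_k^\ast\}$ produced by Theorem~\ref{Theorem:ClosedFormSol}, and set $\rho_k^\ast = (\tilde{\rho}_k^\ast)^2 \geq 0$. The users with $\rho_k^\ast > 0$ are the ones actually contributing a strictly positive term to the sum throughput and must therefore be scheduled, while the users with $\rho_k^\ast = 0$ are transmitting no information at all and can, without any loss in the objective, be declared out of service. This yields the two sets
\begin{equation}
\mathcal{Q}^\ast = \{ k \in \mathcal{K} : \rho_k^\ast > 0 \}, \quad \bar{\mathcal{Q}}^\ast = \mathcal{K} \setminus \mathcal{Q}^\ast,
\end{equation}
exactly as stated. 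The only subtle point, and therefore the main (though mild) obstacle, is verifying that inactive users truly vanish from the interference seen by every remaining active user; this is checked term-by-term in \eqref{eq:MIk} and is why the reformulation from \eqref{Prob:NMSEkE1} to \eqref{Prob:NMSEkEv2} was legitimate in the first place.
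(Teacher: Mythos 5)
Your argument is correct and matches the paper's own (implicit) justification: the paper offers no separate proof for this corollary, relying instead on the observation made just before problem~\eqref{Prob:NMSEkEv2} that inactive users with zero transmit power contribute neither to their own throughput nor to the interference of others, so the scheduled set is simply the support of the optimized power vector. Your term-by-term check of \eqref{eq:ClosedSINR} and \eqref{eq:MIk} makes that reasoning explicit but does not depart from it.
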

Alternatively, we have demonstrated an effective way to obtain both the power allocation to all the users and define the scheduled and unscheduled subsets. 
\begin{algorithm}[t]
	\caption{Alternating optimization approach for \eqref{Prob:NMSEkEv2}} \label{Algorithm1}
	\textbf{Input}:  Channel statistics $\{ \bar{\mathbf{g}}_k, \mathbf{R}_k, \pmb{\Phi}_k, \gamma_{mk}, \beta_{mk} \}$ ; Maximum power levels  $P_{\max,k}, \forall k$; Select initial values $\tilde{\rho}_{k,(0)}  \forall k$; Set up $n=0$ and tolerance $\epsilon$.
	\\
	\textbf{While} \textit{Stopping criterion \eqref{eq:StoppingCriterion} is not satisfied} \textbf{do}
	\begin{itemize}
		\item[1.] Set $n=n+1$.
		\item[2.] Update $v_{k,(n)}$ for all $k$ by \eqref{eq:vk} where each $\delta_{k,(n-1)}$ is computed as in \eqref{eq:Deltak}.
		\item[3.] Update $\alpha_{k,(n)}$ for all $k$ by \eqref{eq:alphak} where each $e_{k,(n)}$ is computed as in \eqref{eq:ekn}.
		\item[4.] Update $\tilde{\rho}_{k,(n)}$ for all $k$ by  \eqref{eq:rhoktilde} where each $\bar{\rho}_{k,(n)}$ is computed as in \eqref{eq:barrhok} with $t_{k,(n)}$ given in \eqref{eq:tkn}.
		\item[5.] Store the current solution $\tilde{\rho}_{k,(n)}$.
	\end{itemize}
   \textbf{End while}\\
	\textbf{Output}: The stationary solution $\tilde{\rho}_{k}^{\ast} = \tilde{\rho}_{k,(n)}$, $\forall k$. 
\end{algorithm}
\section{Deep Learning Framework} \label{Sec:DL}
This section describes a $K$-user interference channel in satellite-terrestrial networks as a heterogeneous graph. Then, using the framework of GNN, we propose Satellite Heterogeneous Graph Neural Network (SHGNN) that learns to predict the policy producing optimal power allocation policy deploying channel statistics in an unsupervised fashion.
\subsection{Graphical Representation}
A graph can be formulated as a tuple $\mathcal{G} = \{\mathcal{V},\mathcal{E}\}$, where $\mathcal{V}$ stands for the set of vertices and $\mathcal{E}$ denotes the set of edges. Generally, vertices and edges can belong to different types. We denote $\mathcal{A}$ as the set of vertex types, where $\mathcal{R}$ is the set of edge types. Graph $\mathcal{G}$ is a homogeneous graph (HomoGraph) if $|\mathcal{A}|=|\mathcal{R}| = 1$, otherwise it is a heterogeneous graph (HetGraph).
In this paper, we formulate the sum throughput optimization problem \eqref{Prob:NMSEk} with the transmit powers and active user set as a learning problem over the following HetGraph with three types of vertices, i.e., $|\mathcal{A}| = 3$, as described as follows:

\textbf{Vertexes and Edges:}
\begin{itemize}
	\item Each AP or each user or the satellite is a vertex.
	\item Each channel link between APs and users or  between the satellite and users is an edge.
\end{itemize} 

\textbf{Features:}
\begin{itemize}
	\item The vertex feature of each user is the available transmit power, i.e., $P_{\max,k}$. APs and the satellite have no feature.
	\item The feature of the edge between AP $m$ and divice $k$ is the large-scale fading coefficient $\beta_{mk}$. The feature of the edge between the satellite and user~$k$ is the LoS channel $\bar{\mathbf{g}}_k$ and the correlation matrix $\mathbf{R}_k$.
\end{itemize}
\begin{figure}[t]
	\centering
	\includegraphics[trim=9.5cm 5.5cm 4.8cm 5.5cm, clip=true, width=3in]{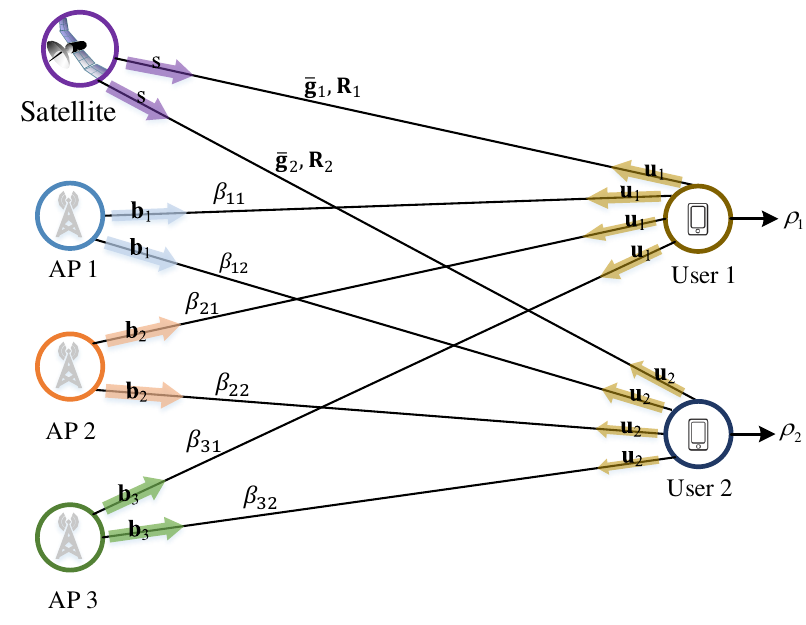} \vspace*{0cm}
	\caption{Heterogeneous graph representation of the considered satellite-terrestrial cooperative network with $M = 3$ and $K = 2$.}
	\label{FigSysModel}
	\vspace*{-0.5cm}
\end{figure}
We entitle this HetGraph as a heterogeneous wireless interference graph (HWIG), where each AP and each user are linked with each other, and each user is also linked to the satellite. We illustrate the graph representation of the considered system in Fig. The designed HWIG can capture the permutation equivariance property of the power allocation optimization problem \eqref{Prob:NMSEk} \cite{Shen2021GNND2D}. Specifically, if the order of users is permuted in the problem, the optimal allocated power should be permuted correspondingly. Generally, simple neural networks such as DNN or CNN can not exploit permutation equivariance property since they can not capture the interaction between each entity in the network. Therefore, in this paper, we apply a HetGNN to learn over the designed HWIG that can exploit the permutation equivariance property \cite{Zhang2019HetGNN}.
\subsection{Heterogeneous Graph Neural Networks (HetGNN) for Integrated Satellite-Terrestrial Cell-Free Massive MIMO}
The HetGNN has been proposed to learn over HetGraph data. For a conventional HetGNN with multiple layers, each vertex updates its hidden state relying on the cascaded information for its neighbor vertices and edges connected to that vertex. Specifically, denote $\mathbf{d}_{m}^{(l)}$ as the output of the $m$-th vertex at the $l$-th layer, each HetGNN layer updates its state via the two steps \cite{Zhang2019HetGNN}:
\begin{enumerate}
	\item \textbf{Aggregation:} The $m$-th vertex employs a neural network to collect its neighbor vertices' output from the last layer and the feature of edges connecting them as
	\begin{equation} \label{eq:Aggre}
		\mathbf{a}_{m,t}^{(l)} = \mathrm{PL}_{n\in \mathcal{N}_t{(m)}}\left(q_1(\mathbf{d}_n^{(l-1)},\mathbf{e}_{mn},\mathbf{W}_{1,t}^{(l)})\right), t \in \mathcal{A},
	\end{equation}
	where $\mathrm{PL}_{n\in \mathcal{N}_t{(m)}}$ is a pooling function, $\mathcal{N}_t{(m)}$ is the set of neighbors of the $m$-th vertex with type $t$, $q_1(.)$ is a neural network used for the aggregation, $\mathbf{W}_{1,t}^{(l)}$ denotes the weights of $q_1(.)$, and $\mathbf{e}_{mn}$ is the feature of the edge $(m,n)$.
	\item \textbf{Combination:} After the aggregated information is obtained at each vertex, a neural network should be applied to manipulate  information and produce the output at each vertex. More specifically, the output is obtained at each vertex as
	\begin{equation} \label{eq:Combine}
		\mathbf{d}_m^{(l)} = q_2\left( \mathbf{d}_m^{(l-1)},\{\mathbf{a}_{m,t}^{(l)},t\in \mathcal{A}\},\mathbf{W}_{2,t}^{(l)}\right),
	\end{equation}
where $q_2(\cdot)$ denotes the combination neural network including learnable parameter $\mathbf{W}_{2,t}^{(l)}$.
\end{enumerate}

We can see from \eqref{eq:Aggre} and \eqref{eq:Combine} that the order of vertices does not affect the output at each vertex since the information is processed independently at each vertex. Furthermore, because each vertex with the same type is processed by the same neural networks, i.e., $q_1(.,.,\mathbf{W}_{1,t}^{(l))})$ and $q_2(.,.,\mathbf{W}_{2,t}^{(l))})$, the output dimension is invariant with the number of APs/users. Therefore, unlike MLPs or CNNs where data dimension must be the same during the training and inference phase, the proposed HetGNN can easily be generalized to different problem sizes.
\subsection{Implementation of SHGNN}
In this subsection, we present the detailed design of SHGNN to learn about the HWIG. To distinguish the outputs of three types of vertices in HWIG, we denote $\mathbf{b}_{m}^{l}$, $\mathbf{u}_{k}^{l}$, and $\mathbf{s}^{l}$ as the output of AP $m$, user~$k$ and the satellite in the $l$-th layer. The computational procedure of the proposed HetGNN comprises three phases as follows.
\subsubsection{Feature Initialization} In the initialization phase, we design the feature for each type of vertex and edge for SHGNN. Firstly, the input feature from the user vertices is the transmit power at each user, i.e., $\mathbf{u}_{k}^{0} = P_{\max,k}$. Since vertices of the APs and the satellite do not have any feature, we set 1 as the input of them, i.e., $\mathbf{b}_{m}^{0}=\mathbf{s}^{0}=1, \forall m$. For the edges between the APs and the users, the large-scale fading coefficients are used as features as $\mathbf{e}_{\mathrm{AP}_m-\mathrm{user}_k} = \beta_{m,k}$. Finally, the LoS channels and the correlation matrices of the space links are used as the feature of the corresponding edges as
\begin{multline}
\mathbf{e}_{\mathrm{satellite}-\mathrm{user}_k}= [\mathrm{vec}(\mathrm{Re}\{\bar{\mathbf{g}}_k\})^T,\mathrm{vec}(\mathrm{Im}\{\bar{\mathbf{g}}_k\})^T, \\ \mathrm{vec}(\mathrm{Re}\{{\mathbf{R}}_k\})^T,\mathrm{vec}(\mathrm{Im}\{{\mathbf{R}}_k\})^T]^T,
\end{multline}
where $\mathrm{vec}(.)$ denotes the vectorization operator. Besides,  $\mathrm{Re}(\mathbf{A})$ and $\mathrm{Im}(\mathbf{A})$ denotes the real and imaginary parts of matrix $\mathbf{A}$.
\subsubsection{Data Processing} SHGNN updates its information with aggregation and combination steps as shown in \eqref{eq:Aggre} and \eqref{eq:Combine}. Specifically, the update of SHGNN consists of three parts as

\textbf{APs aggregating information from users}
\begin{equation}
	\begin{split}
		\mathrm{Aggregate:} \quad &\mathbf{a}_{m,\mathrm{AP}}^{(l)} = \mathrm{MEAN}_{k\in\mathcal{K}}\left\{\mathrm{MLP1}(\mathbf{e}_{\mathrm{AP}_m-\mathrm{user}_k},\mathbf{b}_m^{(l-1)})\right\}, \\
		\mathrm{Combine:} \quad  & \mathbf{b}_m^{(l)} = \mathrm{ReLU} \left(\mathrm{MLP2}(\mathbf{b}_m^{(l-1)},\mathbf{a}_{m,\mathrm{AP}}^{(l)})\right),
	\end{split}	
\end{equation}

\textbf{Satellite aggregating information from users}
\begin{equation}
	\begin{split}
		\mathrm{Aggregate:} \quad &\mathbf{a}_{\mathrm{satellite}}^{(l)} = \\
  &\mathrm{MEAN}_{k\in\mathcal{K}}\left\{\mathrm{MLP3}(\mathbf{e}_{\mathrm{satellite}-\mathrm{user}_k},\mathbf{s}^{(l-1)})\right\}, \\
		\mathrm{Combine:} \quad  & \mathbf{s}^{(l)} = \mathrm{ReLU} \left(\mathrm{MLP4}(\mathbf{s}^{(l-1)},\mathbf{a}_{\mathrm{satellite}}^{(l)})\right),
	\end{split}	
\end{equation}

\textbf{Users aggregating information from APs and satellite}
\begin{equation}
	\begin{split}
		\mathrm{Aggregate:} \quad &\mathbf{a}_{k,\mathrm{user-AP}}^{(l)} = \\
     &\mathrm{MEAN}_{m\in\mathcal{M}}\left\{\mathrm{MLP5}(\mathbf{e}_{\mathrm{AP}_m-\mathrm{user}_k},\mathbf{u}_k^{(l-1)})\right\}, \\
		&\mathbf{a}_{k,\mathrm{user-satellite}}^{(l)} = \mathrm{MLP6}(\mathbf{e}_{\mathrm{satellite}-\mathrm{user}_k},\mathbf{u}_k^{(l-1)}), \\
		\mathrm{Combine:} \quad  & \mathbf{u}_k^{(l)} = \\
  &\mathrm{ReLU} \left(\mathrm{MLP7}(\mathbf{u}_k^{(l-1)},\mathbf{a}_{k,\mathrm{user-AP}}^{(l)},\mathbf{a}_{k,\mathrm{user-satellite}}^{(l)})\right),
	\end{split}	
\end{equation}
where $\mathrm{ReLU}(.)$ is the ReLu activation function, $\mathrm{MEAN}(\{.\})$ is the pooling function that calculates the mean value of a set. At the last HetGNN layer, the output of user vertices will be processed by an MLP to obtain the optimal power allocation vector. A Sigmoid activation layer ensures the predicted power vector satisfies the transmit power constraint. Specifically, the optimal allocated power is obtained as $\mathbf{p}^* = \mathbf{P}_{\max} \odot \sigma\left(\mathrm{MLP8}(\mathbf{u}^{(D)})\right)$, where $\sigma(.)$ denote Sigmoid activation function, $\mathbf{P}_{\max} = [P_{\max,1},\cdots,P_{\max,K}]$, $\mathbf{p}^* = [p_1^*,\cdots,p_K^*]$ is the optimal allocated power vector, $\mathbf{u}^{(D)} = [\mathbf{u}_1^{(D)},\cdots,\mathbf{u}_K^{(D)}]$, and $D$ is the number of HetGNN layers. Finally, the loss function adopted to train the neural network is the negative sum rate defined as
\begin{equation}
	\mathcal{L} = -\mathbb{E}\left\{B\left( 1 - \frac{\tau_p}{\tau_c} \right) \sum_{k=1}^{K} \log_2 ( 1 + \mathrm{SINR}_k(\mathbf{p}^*(\theta)) )\right\},
\end{equation}
where $\theta$ denotes parameters of SHGNN, $\mathrm{SINR}_k$ is the effective SINR of user $k$ calculated as in \eqref{eq:ClosedSINR}. It is noteworthy that SHGNN is trained unsupervised without requiring any labels. We stress that the proposed SHGNN can be applied for other objective functions in the considered system by simply replacing the loss function with the one that needs to be optimized.
\section{Numerical Results} \label{Sec:NumericalResults}
We now evaluate the analysis and optimal power allocation discussed in Section~\ref{Sec:UplinkPerfAna}--\ref{Sec:DL} for a system with $40$ APs spreading in a square area of $16$~[km$^2$] that serves a various number of users. All the devices are mapped into a 3D Cartesian coordinate system. An NGSO satellite is  at position $(300, 300, 400)$~[km]. The satellite antenna is fabricated by a rectangular array with $N_V = N_H  = 5$. The operating  bandwidth is set to $B= 20$~[MHz], and the frequency of the carrier wave is $3$~[GHz]. Meanwhile, the noise figures at the APs and the satellite are $1.2$~[dB] and $4$~[dB], respectively. Moreover, the large-scale fading  between AP~$m$ and user~$k$, measured in [dB], is defined as 
$\beta_{mk} = G_m  + G_k - 8.5 - 38.63 \log_{10} (d_{mk}/d_0) - 20 \log_{10} (f_c) + z_{mk}$,
where $z_{mk} \sim \mathcal{CN}(0, \delta_m^2)$ represents the influence of shadow fading having zero mean and $\delta_m$ [dB] denoting its standard derivation. The slope distance thresholds are $d_0$~[m] and $d_1$~[m], respectively. In addition, $G_m$~[dBi] and $G_k$~[dBi] represent the antenna gain at AP~$m$ and user~$k$, respectively. Here, we select $G_k = G_m = 10$~[dBi]. Besides, the large-scale fading, measured in [dB], between the satellite and user ~$k$ is given as $\beta_k = G_k + G - 32.45 - 20 \log_{10} (d_k /d_0) - 20 \log_{10} f_c + z_k$, where $G$ is the antenna gain at the satellite \cite{van2022space} and $z_k$ stands for the shadow fading defined by a log-normal distribution with its variance 
being a function of the elevation angle, the channel condition, and the carrier frequency \cite{lin20215g}. 
\begin{figure*}[t]
	\begin{minipage}{0.48\textwidth}
		\centering
		\includegraphics[trim=0.6cm 0.0cm 1.1cm 0.6cm, clip=true, width=3.2in]{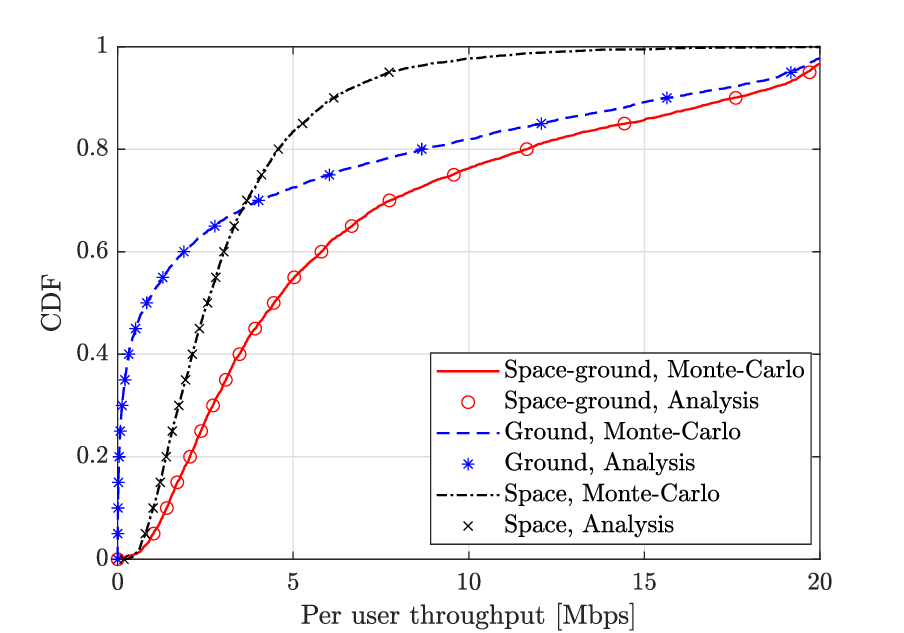} \vspace*{-0.2cm}
		\caption{CDF of the per user throughput [Mbps] utilizing Monte Carlo simulations versus the analyses with $K=20, \tau_c = 10000,$ and $\tau_p = K/2$. } \label{Fig:MCCFPerUser}
		\vspace*{-0.2cm}
	\end{minipage}
	\hfil
	\begin{minipage}{0.48\textwidth}
		\centering
		\includegraphics[trim=0.6cm 0.0cm 1.4cm 0.6cm, clip=true, width=3.2in]{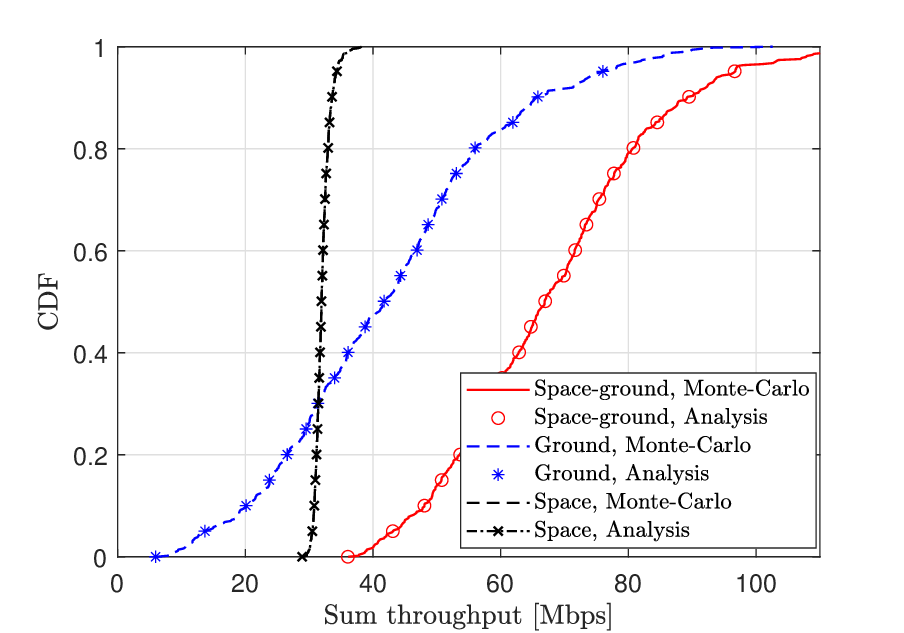} \vspace*{-0.2cm}
		\caption{CDF of the sum throughput [Mbps] utilizing Monte Carlo simulations versus the analyses with $K=20, \tau_c = 10000,$ and $\tau_p = K/2$.} \label{Fig:MCCFSum}
		\vspace*{-0.2cm}
	\end{minipage}
\end{figure*}
\begin{figure*}[t]
	\begin{minipage}{0.48\textwidth}
		\centering
		\includegraphics[trim=0.6cm 0.0cm 1.4cm 0.6cm, clip=true, width=3.2in]{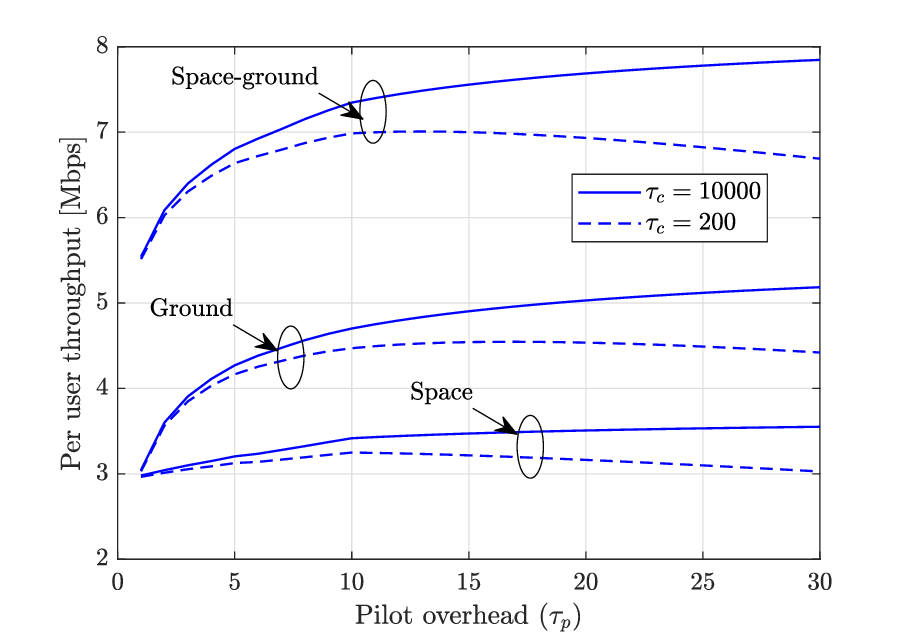} \vspace*{-0.2cm}
		\caption{Per user throughput versus the different number of orthogonal pilot signals with $K = 10$.} \label{Fig:PerUserRatetaup}
		\vspace*{-0.2cm}
	\end{minipage}
	\hfil
	\begin{minipage}{0.48\textwidth}
		\centering
		\includegraphics[trim=0.6cm 0.0cm 1.1cm 0.6cm, clip=true, width=3.2in]{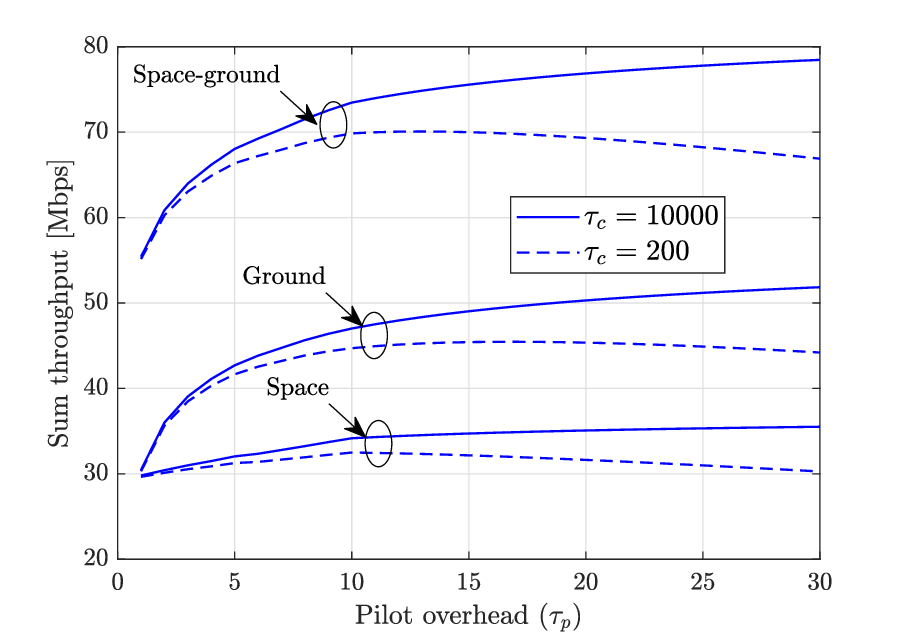} \vspace*{-0.2cm}
		\caption{Sum throughput versus the different number of orthogonal pilot signals with $K = 10$.} \label{Fig:SumRatetaup}
		\vspace*{-0.2cm}
	\end{minipage}
\end{figure*}
\subsection{All Users are Scheduled}
In Fig.~\ref{Fig:MCCFPerUser}, we display the cumulative distribution function (CDF) of the per-user throughput by the three different benchmarks comprising the space-ground cooperative network that comprises communication from both satellite and APs (denoted as Space-ground); the space network that comprises communication from the satellite only (denoted as Space); and the terrestrial system that comprises communication from the APs only (denoted as Ground). Monte-Carlo simulations  entirely overlap with the analytical results, confirming the correctness of our closed-form expression of the uplink ergodic throughput obtained in Theorem~\ref{Theorem:ClosedForm} for space-ground cooperative networks and its typical scenarios in which either only the satellite or APs are exploited. Even though the APs can offer better per-user throughput than the satellite if the LoS channels from the space links are weak, the latter outperforms the former $3.3 \times$ on the median. By coherently processing the received signals, the satellite-terrestrial cooperative system inherits the benefits of multiple observations for a constructive combination at the CPU. Specifically, the cooperative network provides $5.9 \times$ better median per user throughput than the baseline. 

In Fig.~\ref{Fig:MCCFSum}, we show the CDF of sum throughput by the considered benchmarks. Monte-Carlo simulations match well with the analytical results for all the considered various user locations and shadow fading realizations. Besides, the space network with a single satellite yields an average sum throughput of about $32.1$~[Mbps]. More fluctuations, the ground network only offers $42.5$~[Mbps], which is  $1.3\times$ better than only utilizing a satellite to serve many users. The space-ground cooperative network offers a sum throughput of about $67.9$~[Mbps] that surpasses the space network $2.1\times$. At the $95\%$-likely, the APs only can provide $14.5$~[Mbps] to all the users in total, while that of the space network is $30.5$~[Mbps] that implies a superiority of $2.1\times$. Meanwhile, the considered integragated cooperative network can improve the sum throughput of $3.0 \times$. Analytical and numerical results confirm the advantages of the space-ground cooperative systems in upgrading user throughput in the coverage area.

In Figs.~\ref{Fig:PerUserRatetaup} and \ref{Fig:SumRatetaup}, we exploit the effects of channel estimation overhead in terms of  the per user and sum throughput, respectively. The space-ground cooperative network outperforms the remaining benchmarks significantly over all the considered length of pilot signals. As $\tau_c = 200$ and $\tau_p = 10$, the per  throughput offered by the space-ground, ground, and space network is $7.0$~[Mbps], $4.5$~[Mbps], and $3.2$~[Mbps], respectively. It implies the improvement of our considered cooperative network of about $1.6 \times$ and $2.2\times$ compared to the system with either the APs or satellite only. Increasing the number of symbols in each coherence interval for the pilot training phase improves both the sum and per user throughput as $\tau_p$ is still small since the network can exploit more orthogonal pilot signals and therefore the channel estimation quality is improved drastically. Nonetheless, for the fast fading channels with short coherence time, e.g., $\tau_c = 200$, as the pilot overhead exceeds the number of users, the throughput decreases due to the short coherence time allocated to the data transmission. In contrast, if the propagation channels are more stable, e.g., $\tau_c = 100000$, the  throughput consistently increases under the parameter settings.

\begin{figure*}[t]
	\begin{minipage}{0.48\textwidth}
		\centering
		\includegraphics[trim=0.6cm 0.0cm 1.1cm 0.6cm, clip=true, width=3.2in]{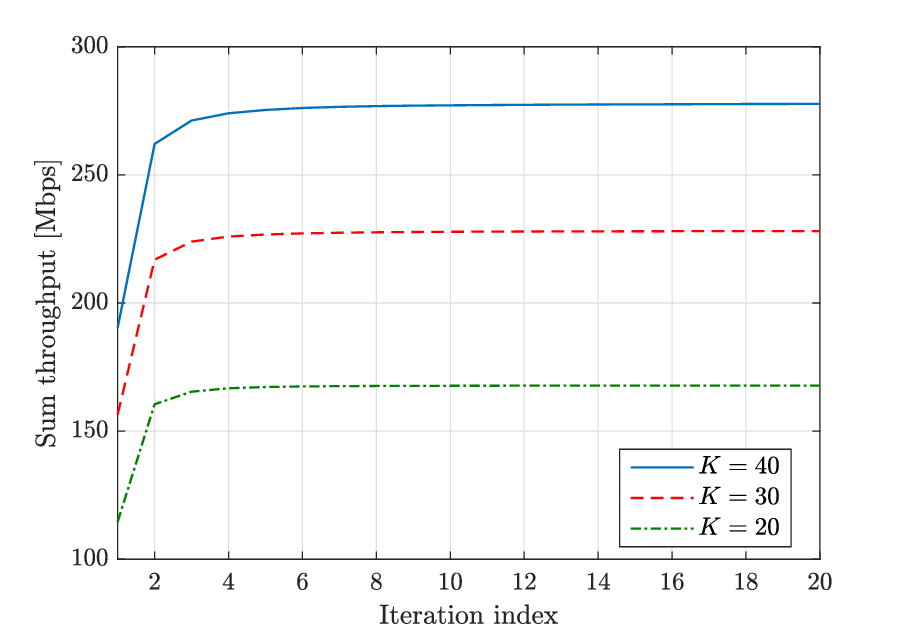} \vspace*{-0.2cm}
		\caption{Convergence of Algorithm~\ref{Algorithm1} along iterations with  a different number of users, $\tau_c = 10000$, and $\tau_p = K/2$.} \label{Fig:Convergence}
		\vspace*{-0.2cm}
	\end{minipage}
	\hfil
	\begin{minipage}{0.48\textwidth}
		\centering
		\includegraphics[trim=0.6cm 0.0cm 1.4cm 0.6cm, clip=true, width=3.2in]{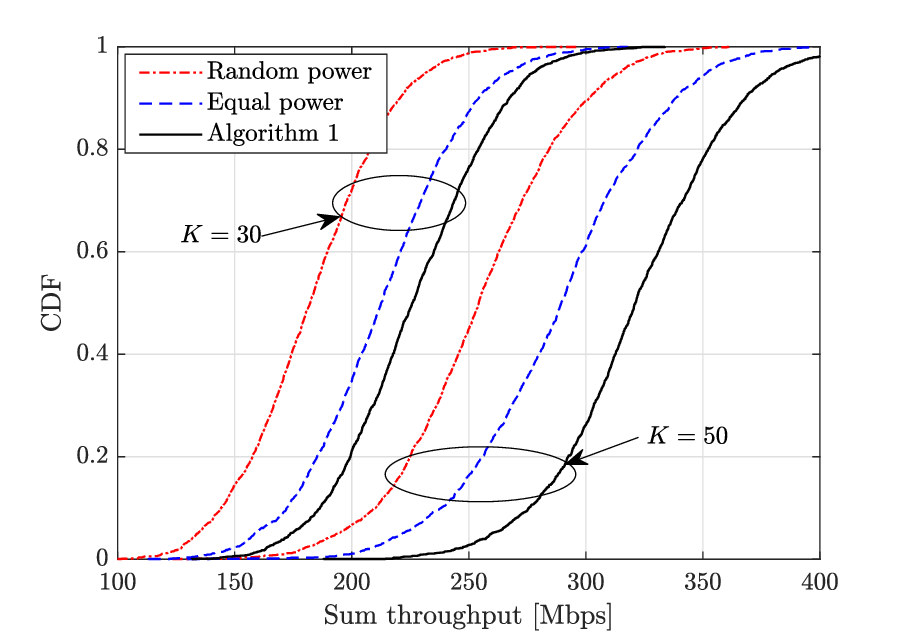} \vspace*{-0.2cm}
		\caption{CDF of the sum throughput [Mbps]  with different power allocation strategies, $\tau_c = 10000$, and $\tau_p = K/2$.} \label{Fig:SumRate}
		\vspace*{-0.2cm}
	\end{minipage}
\end{figure*}
\begin{figure*}[t]
	\begin{minipage}{0.48\textwidth}
		\centering
		\includegraphics[trim=0.6cm 0.0cm 1.1cm 0.6cm, clip=true, width=3.2in]{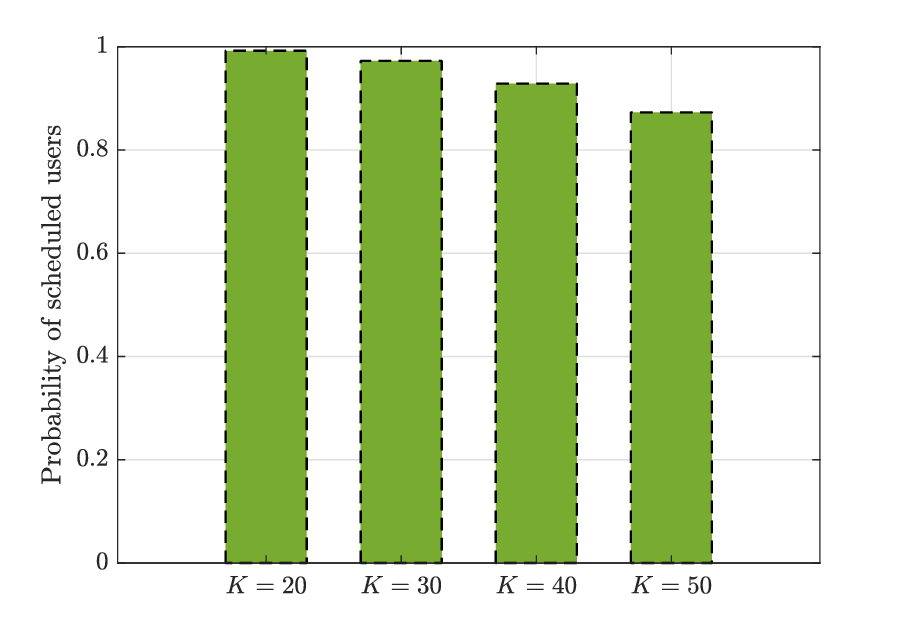} \vspace*{-0.2cm}
		\caption{Probability of scheduled users versus the different number of available users with $\tau_c = 10000$ and $\tau_p = K/2$.} \label{Fig:Scheduled}
		\vspace*{-0.2cm}
	\end{minipage}
	\hfil
	\begin{minipage}{0.48\textwidth}
		\centering
		\includegraphics[trim=0.4cm 0.0cm 1.4cm 0.6cm, clip=true, width=3.2in]{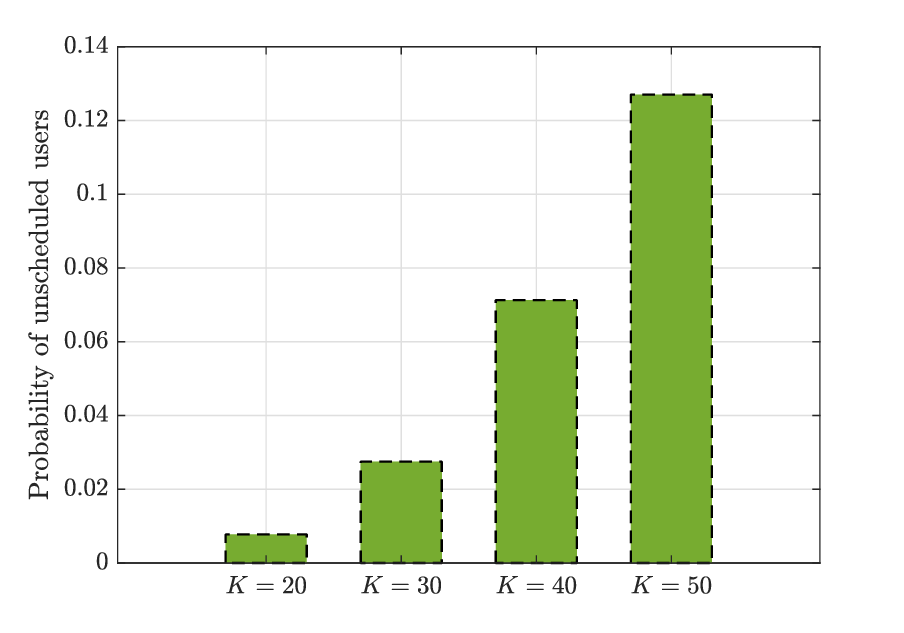} \vspace*{-0.2cm}
		\caption{Probability of unscheduled users versus the different number of available users with $\tau_c = 10000$ and $\tau_p = K/2$.} \label{Fig:Unscheduled}
		\vspace*{-0.2cm}
	\end{minipage}
\end{figure*}
\subsection{Scheduled Users versus Unscheduled Users}
In Fig.~\ref{Fig:Convergence}, the convergence property of Algorithm~\ref{Algorithm1}  is plotted with the different number of available users. The numerical results unveil that the proposed power allocation and user scheduling algorithm converges very fast after less than $10$ iterations. We further observe superior improvements in the solution along the iteration index.  For example, the initial power allocation only provides the sum throughput of $114.6$~[Mbps] to $20$ available users. Nonetheless, the stationary solution provides the sum throughput up to $167.8$~[Mbps], related to an improvement of $1.5\times$ compared to the initialization. This is because a number of users have been unscheduled to avoid dramatic power consumption under unfavorable positions with weak channel gains. 

In Fig.~\ref{Fig:SumRate}, we plot the CDF of the sum throughput by deploying three power allocation strategies consisting of $i)$ Algorithm~\ref{Algorithm1}; $ii)$ the random power allocation where the transmit power coefficients are uniformly distributed as $0\leq \rho_k \leq P_{\max,k}, \forall k$ \cite{sun2018learning}; and $iii)$ the equal power allocation where all the power coefficients are allocated with the maximum level, i.e., $\rho_k = P_{\max,k}, \forall k$ \cite{van2020power}.  While Algorithm~\ref{Algorithm1} deactivates several users with weak channel gains  to reduce mutual interference, the remaining benchmarks admit all $K$ users to the network. For a system with $30$ users, the sum throughput is $182.6$~[Mbps] and $212.9$~[Mbps] offered by the random and fixed power allocation on average, respectively. Thanks to the advanced scheduling policy, Algorithm~\ref{Algorithm1} provides the sum throughput  $226.5$~[Mbps], which is $24.0\%$ better than the baseline. For a system with $50$ users, the proposed advanced scheduling policy outperforms the random power allocation $26.5\%$.

In Figs.~\ref{Fig:Scheduled} and \ref{Fig:Unscheduled}, we show the percentage of scheduled and unscheduled users, respectively, by exploiting Algorithm~\ref{Algorithm1}. If only $20$ users are available in the coverage area, the satellite-terrestrial cooperative network can serve most of them with $99.2\%$ scheduled users and $0.8\%$  unscheduled users. Nevertheless, a remarkable portion of available users should be ignored if the network density increases. In more detail, with $50$ users in the network, only the portion of $87.3\%$ available users are admitted to the service. In contrast, the remaining of $12.7\%$ available users is out of service. The obtained results demonstrate the roles of user scheduling and power allocation in improving system performance for large network dimensions. 

\begin{table*}[t]
	\centering
	\caption{Performance comparison between learning-based and model-based approaches} \label{Table:Perf}
	\begin{tabular}{|c|ccc|cc|}
		\hline
		\multirow{2}{*}{\begin{tabular}[c]{@{}l@{}}Number\\ of users \end{tabular}} & \multicolumn{3}{c|}{Sum throughput [Mbps]}                             & \multicolumn{2}{c|}{Run-time {[}ms{]}}                                                              \\ \cline{2-6} 
		& \multicolumn{1}{l|}{GNN-Stable} & \multicolumn{1}{l|}{GNN-Robust} & Algorithm~\ref{Algorithm1} & \multicolumn{1}{l|}{\begin{tabular}[c]{@{}l@{}}GNN-Stable,\\ GNN-Robust\end{tabular}} & Algorithm~\ref{Algorithm1} \\ \hline
		$K=20$                                                                         & \multicolumn{1}{l|}{170.5}           & \multicolumn{1}{l|}{171.4}           &         167.1    & \multicolumn{1}{l|}{30}                                                                 &        621     \\ \hline
		$K=30$                                                                         & \multicolumn{1}{l|}{228.6}           & \multicolumn{1}{l|}{228.6}           &     226.5        & \multicolumn{1}{l|}{34}                                                                 &       1667      \\ \hline
		$K=40$                                                                         & \multicolumn{1}{l|}{275.6}           & \multicolumn{1}{l|}{275.9}           &     279.3        & \multicolumn{1}{l|}{35}                                                                 &         2939    \\ \hline
		$K=50$                                                                         & \multicolumn{1}{l|}{317.8}           & \multicolumn{1}{l|}{318.9}           &     321.9        & \multicolumn{1}{l|}{36}                                                                 &        5089     \\ \hline
	\end{tabular}
\end{table*}
\subsection{Model-based versus Learning-based Approaches}
 Table~\ref{Table:Perf} compares the performance of the learning-based and model-based approaches to solve problem~\eqref{Prob:NMSEk} by exploiting the SHGNN in Section~\ref{Sec:DL} and Algorithm~\ref{Algorithm1} in Section~\ref{SubSec:IterAl}, respectively. We consider two different learning-based approaches consisting of the GNN-stable in which the neural network is trained by utilizing measurements from $K=30$ users, while the testing phase is applied for a communication system with a various number of users with $K \in \{20, 30, 40, 50\}$. In contrast, the GNN-Robust is a benchmark with an equal number of users in both of the phases, i.e., the training and testing. Regarding the sum  throughput, the learning-based approaches are very competitive with the model-based approach. Thanks to the unsupervised learning, the GNN-Robust can provide $2.6\%$ better sum  throughput than Algorithm~\ref{Algorithm1}. Besides, GNN-Stable manifests the scalability of our proposed learning-based approach by only training a neural network. It can be applied to a system serving different numbers of users with slightly worse performance than the robust approach, i.e., GNN-Robust. In particular, GNN-Stable only offers $0.5\%$ lower sum throughput than GNN-Robust.  Regarding runtime, the two  learning-based approaches can predict the solution to problem~\eqref{Prob:NMSEk} in the order of milliseconds. Notably, the time consumption only slightly increases $20\%$ as the number of users grows from $K=20$ to $K=50$. In contrast, Algorithm~\ref{Algorithm1} has much higher time consumption than the two previous benchmarks, which scales up $141.4\times$. Furthermore, Algorithm~\ref{Algorithm1} does not scale with the network dimension well. If the number of users increases $2.5\times$ from $K=20$ to $K=50$, the time consumption grows $8.2\times$.

\section{Conclusion} \label{Sec:Conclusion}
In this paper, we have demonstrated the benefits of coherent signal processing from the satellite and ground sectors in enhancing the throughput of complicated networks in which many users simultaneously request to be served in a large coverage area. For such, a novel network architecture was proposed where a satellite and distributed APs are integrated into a unified framework. In addition, users are classified into different subsets and some of them may be ignored from service due to the limited power budget. The uplink ergodic throughput is derived under several practical conditions.  A sum throughput maximization problem is formulated to find the optimal power coefficients and the subset of served users, contained on the transmit power limitation at each user and the availability of channel statistics only. An effective algorithm is proposed to overcome the inherent non-convexity and to attain a low-complexity solution in polynomial time by exploiting the alternating optimization. By utilizing numerical results, we demonstrated that the integrated satellite-terrestrial cell-free massive MIMO IoT systems can provide satisfied services to the vast majority of available users. Still, several users under harsh conditions should be preserved in the queue to maximize the total throughput.   
\appendix
\subsection{Useful Lemmas}
This section presents two useful lemmas utilized for the ergodic throughput analysis. 
\begin{lemma}\cite[Lemma~7]{Chien2020book} \label{Lemma:Supp1}
For two independent random vectors, $\mathbf{u} \sim \mathcal{CN}(\bar{\mathbf{u}}, \mathbf{R}_u)$ and  $\mathbf{v} \sim \mathcal{CN}(\bar{\mathbf{v}}, \mathbf{R}_v)$ with the mean values $\bar{\mathbf{u}}, \bar{\mathbf{v}} \in \mathbb{C}^{N}$ and the covariance matrices $\mathbf{R}_u, \mathbf{R}_v \in \mathbb{C}^{N \times N}$, it holds that
\begin{equation}
\mathbb{E}\{ |\mathbf{u}^H \mathbf{v}|^2 \} = \mathrm{tr}( \mathbf{R}_v \mathbf{R}_u ) + \bar{\mathbf{v}}^H \mathbf{R}_u \bar{\mathbf{v}} +  \bar{\mathbf{u}}^H \mathbf{R}_v \bar{\mathbf{u}} + | \bar{\mathbf{u}}^H \bar{\mathbf{v}}|^2.
\end{equation}
\end{lemma}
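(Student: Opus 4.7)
The plan is to decompose each Gaussian vector into its deterministic mean plus a zero-mean component and expand the square. Write $\mathbf{u} = \bar{\mathbf{u}} + \tilde{\mathbf{u}}$ and $\mathbf{v} = \bar{\mathbf{v}} + \tilde{\mathbf{v}}$, where $\tilde{\mathbf{u}} \sim \mathcal{CN}(\mathbf{0}, \mathbf{R}_u)$ and $\tilde{\mathbf{v}} \sim \mathcal{CN}(\mathbf{0}, \mathbf{R}_v)$ are independent. Expanding $\mathbf{u}^H \mathbf{v}$ produces four scalar contributions $A = \bar{\mathbf{u}}^H \bar{\mathbf{v}}$, $B = \bar{\mathbf{u}}^H \tilde{\mathbf{v}}$, $C = \tilde{\mathbf{u}}^H \bar{\mathbf{v}}$, and $D = \tilde{\mathbf{u}}^H \tilde{\mathbf{v}}$, so $|\mathbf{u}^H \mathbf{v}|^2 = |A+B+C+D|^2$.

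Next, I would compute the expectation by first arguing that every cross term vanishes. Since $\tilde{\mathbf{u}}$ and $\tilde{\mathbf{v}}$ are zero-mean and independent, any product containing a single factor from $\{B, C, D\}$ evaluates to zero upon taking expectation. For the mixed products $\mathbb{E}\{B \overline{C}\}$, $\mathbb{E}\{B \overline{D}\}$, and $\mathbb{E}\{C \overline{D}\}$, I would split the expectation through independence so that at least one remaining factor becomes $\mathbb{E}\{\tilde{\mathbf{u}}\}$ or $\mathbb{E}\{\tilde{\mathbf{v}}\}$, which vanishes. The computation then reduces to $\mathbb{E}\{|\mathbf{u}^H \mathbf{v}|^2\} = |A|^2 + \mathbb{E}\{|B|^2\} + \mathbb{E}\{|C|^2\} + \mathbb{E}\{|D|^2\}$.

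Finally, I would evaluate the four surviving diagonal terms. Directly from $\mathbb{E}\{\tilde{\mathbf{v}}\tilde{\mathbf{v}}^H\} = \mathbf{R}_v$ and $\mathbb{E}\{\tilde{\mathbf{u}}\tilde{\mathbf{u}}^H\} = \mathbf{R}_u$, one obtains $\mathbb{E}\{|B|^2\} = \bar{\mathbf{u}}^H \mathbf{R}_v \bar{\mathbf{u}}$ and $\mathbb{E}\{|C|^2\} = \bar{\mathbf{v}}^H \mathbf{R}_u \bar{\mathbf{v}}$. The only term that needs an extra step is $\mathbb{E}\{|D|^2\}$: I would condition on $\tilde{\mathbf{u}}$ and use independence to get $\mathbb{E}\{|\tilde{\mathbf{u}}^H \tilde{\mathbf{v}}|^2 \mid \tilde{\mathbf{u}}\} = \tilde{\mathbf{u}}^H \mathbf{R}_v \tilde{\mathbf{u}}$, then apply the trace identity $\mathbb{E}\{\tilde{\mathbf{u}}^H \mathbf{R}_v \tilde{\mathbf{u}}\} = \mathrm{tr}(\mathbf{R}_v \mathbf{R}_u)$. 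Summing the four contributions recovers the claimed formula. The main (mild) obstacle is simply careful bookkeeping of the twelve cross terms; beyond that, the argument is standard second-moment manipulation for circularly symmetric complex Gaussian vectors.
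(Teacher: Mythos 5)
Your proof is correct: the mean-plus-fluctuation decomposition, the vanishing of all cross terms by independence and zero mean (each cross term contains at least one linear factor of $\tilde{\mathbf{u}}$ or $\tilde{\mathbf{v}}$), and the evaluation of the four diagonal terms via $\mathbb{E}\{\tilde{\mathbf{u}}\tilde{\mathbf{u}}^H\}=\mathbf{R}_u$, $\mathbb{E}\{\tilde{\mathbf{v}}\tilde{\mathbf{v}}^H\}=\mathbf{R}_v$, and the trace identity all check out and reproduce the stated formula. The paper itself gives no proof of this lemma, delegating it entirely to an external citation, so your argument is simply the standard self-contained derivation that the citation stands in for.
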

\begin{lemma} \label{Lemma:Moment4v1}
For the two correlated random vectors, $\mathbf{u} = \mathbf{R}_u^{1/2} \mathbf{m} + \bar{\mathbf{u}}$ and $\mathbf{v} = \mathbf{R}_v^{1/2} \mathbf{m} + \bar{\mathbf{v}}$ with $\mathbf{m} \sim \mathcal{CN}(\mathbf{0}, \mathbf{I}_N)$, 
it holds that
\begin{align}
& \mathbb{E} \{ \mathbf{u}^H \mathbf{v} \} = \mathrm{tr} \big(  (\mathbf{R}_u^H)^{1/2} \mathbf{R}_v^{1/2} \big) + \bar{\mathbf{u}}^H \bar{\mathbf{v}}, \label{eq:ExpInner}\\ 	
& \mathbb{E} \{ |\mathbf{u}^H \mathbf{v}|^2 \} =  |\bar{\mathbf{v}}^H \bar{\mathbf{u}} |^2 +  2 \mathsf{Re} \left\{ \bar{\mathbf{v}}^H \bar{\mathbf{u}} \mathrm{tr}\big((\mathbf{R}_u^{H})^{1/2} \mathbf{R}_v^{1/2} \big) \right\} \notag \\
& + \bar{\mathbf{u}}^H  \mathbf{R}_v  \bar{\mathbf{u}} +  \bar{\mathbf{v}}^H  \mathbf{R}_u  \bar{\mathbf{v}}  +  \big| \mathrm{tr}\big( (\mathbf{R}_u^H )^{1/2}  \mathbf{R}_v^{1/2} \big) \big|^2 + \mathrm{tr} (  \mathbf{R}_v \mathbf{R}_u).
\end{align}
\end{lemma}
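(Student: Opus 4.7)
The plan is to reduce both identities to standard moment computations for the single underlying complex Gaussian vector $\mathbf{m}\sim\mathcal{CN}(\mathbf{0},\mathbf{I}_N)$ that generates $\mathbf{u}$ and $\mathbf{v}$ jointly. First I would expand
\begin{equation*}
\mathbf{u}^H\mathbf{v} = \mathbf{m}^H\mathbf{A}\mathbf{m} + \mathbf{c}^H\mathbf{m} + \mathbf{m}^H\mathbf{b} + d,
\end{equation*}
where I set $\mathbf{A} = (\mathbf{R}_u^H)^{1/2}\mathbf{R}_v^{1/2}$, $\mathbf{b} = (\mathbf{R}_u^H)^{1/2}\bar{\mathbf{v}}$, $\mathbf{c} = (\mathbf{R}_v^H)^{1/2}\bar{\mathbf{u}}$, and $d = \bar{\mathbf{u}}^H\bar{\mathbf{v}}$. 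With this parametrisation, the first identity is immediate: using $\mathbb{E}\{\mathbf{m}\}=\mathbf{0}$ and $\mathbb{E}\{\mathbf{m}^H\mathbf{A}\mathbf{m}\}=\mathrm{tr}(\mathbf{A})$, only the quadratic and constant pieces survive, yielding $\mathrm{tr}((\mathbf{R}_u^H)^{1/2}\mathbf{R}_v^{1/2})+\bar{\mathbf{u}}^H\bar{\mathbf{v}}$.

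For the second identity, I would square the decomposition by writing $|\mathbf{u}^H\mathbf{v}|^2 = (\mathbf{u}^H\mathbf{v})(\mathbf{v}^H\mathbf{u})$, where $\mathbf{v}^H\mathbf{u} = \mathbf{m}^H\mathbf{A}^H\mathbf{m}+\mathbf{b}^H\mathbf{m}+\mathbf{m}^H\mathbf{c}+d^{\ast}$. Expanding the product produces sixteen terms. Circular symmetry of $\mathbf{m}$ immediately kills every term that is of odd total degree in $\mathbf{m}$ (eight cross products between a quadratic and a linear factor), as well as the two pure-conjugate linear--linear products $\mathbb{E}\{(\mathbf{m}^H\mathbf{b})(\mathbf{m}^H\mathbf{c})\}$ and $\mathbb{E}\{(\mathbf{c}^H\mathbf{m})(\mathbf{b}^H\mathbf{m})\}$ (since $\mathbb{E}\{m_im_j\}=0$). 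What remains are the pairings
\begin{equation*}
X_1Y_1,\;X_2Y_3,\;X_3Y_2,\;X_1Y_4,\;X_4Y_1,\;X_4Y_4,
\end{equation*}
together with $\mathbb{E}\{X_i Y_j\}=0$ on all other indices.

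The nontrivial ingredient is $\mathbb{E}\{(\mathbf{m}^H\mathbf{A}\mathbf{m})(\mathbf{m}^H\mathbf{A}^H\mathbf{m})\}$, which I would handle via the Isserlis/Wick rule for circularly symmetric complex Gaussians, $\mathbb{E}\{m_i^{\ast}m_jm_k^{\ast}m_l\}=\delta_{ij}\delta_{kl}+\delta_{il}\delta_{kj}$, giving $|\mathrm{tr}(\mathbf{A})|^2+\mathrm{tr}(\mathbf{A}\mathbf{A}^H)$. This is the main obstacle, since one has to check that only the starred--unstarred contractions survive and then rewrite $\mathrm{tr}(\mathbf{A}\mathbf{A}^H)=\mathrm{tr}(\mathbf{R}_u^{1/2}\mathbf{R}_v\mathbf{R}_u^{1/2})=\mathrm{tr}(\mathbf{R}_v\mathbf{R}_u)$ using the Hermitian PSD structure of $\mathbf{R}_u$ and the cyclic property. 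The remaining expectations are elementary: $\mathbb{E}\{\mathbf{m}^H\mathbf{b}\mathbf{b}^H\mathbf{m}\}=\|\mathbf{b}\|^2=\bar{\mathbf{v}}^H\mathbf{R}_u\bar{\mathbf{v}}$, $\mathbb{E}\{\mathbf{c}^H\mathbf{m}\mathbf{m}^H\mathbf{c}\}=\|\mathbf{c}\|^2=\bar{\mathbf{u}}^H\mathbf{R}_v\bar{\mathbf{u}}$, and $\mathbb{E}\{X_1\}Y_4+X_4\mathbb{E}\{Y_1\}=\mathrm{tr}(\mathbf{A})\bar{\mathbf{v}}^H\bar{\mathbf{u}}+\bar{\mathbf{u}}^H\bar{\mathbf{v}}(\mathrm{tr}(\mathbf{A}))^{\ast}$.

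Finally, I would combine the two cross contributions using $z+z^{\ast}=2\mathsf{Re}\{z\}$ to form the $2\mathsf{Re}\{\bar{\mathbf{v}}^H\bar{\mathbf{u}}\mathrm{tr}((\mathbf{R}_u^H)^{1/2}\mathbf{R}_v^{1/2})\}$ term, and collect the six surviving contributions to recover exactly the right-hand side of the lemma. Apart from the fourth-moment computation, every step is a direct consequence of the first two moments of $\mathbf{m}$ and the Hermitian PSD property of $\mathbf{R}_u,\mathbf{R}_v$.
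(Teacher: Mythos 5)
Your proposal is correct and follows essentially the same route as the paper: expand $\mathbf{u}^H\mathbf{v}$ in terms of the common Gaussian vector $\mathbf{m}$ and evaluate the resulting moments. The only difference is one of completeness---the paper carries out only the first-moment computation explicitly and delegates the second-moment identity to a cited lemma, whereas you supply the full Wick-rule calculation; your term-by-term bookkeeping and the reduction $\mathrm{tr}(\mathbf{A}\mathbf{A}^H)=\mathrm{tr}(\mathbf{R}_v\mathbf{R}_u)$ both check out.
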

\begin{proof}
By utilizing the structure of $\mathbf{u}$ and $\mathbf{v}$, we compute the expectation of the inner product in \eqref{eq:ExpInner} as follows
\begin{equation}
\begin{split}
& \mathbb{E} \{ \mathbf{u}^H \mathbf{v} \} =  \mathbb{E} \big\{ ( \mathbf{m}^H (\mathbf{R}_u^H)^{1/2} + \bar{\mathbf{u}}^H )(\mathbf{R}_v^{1/2} \mathbf{m} + \bar{\mathbf{v}}) \big\} \\
& =  \mathbb{E} \big\{  \mathbf{m}^H (\mathbf{R}_u^H)^{1/2} \mathbf{R}_v^{1/2} \mathbf{m} \big\} + \bar{\mathbf{u}}^H \bar{\mathbf{v}} \\
&= \mathrm{tr} \big(  (\mathbf{R}_u^H)^{1/2} \mathbf{R}_v^{1/2} \big) + \bar{\mathbf{u}}^H \bar{\mathbf{v}},
\end{split}
\end{equation}
thanks to the zero mean of random vector $\mathbf{m}$ and therefore some expectations have vanished. The second moment of $\mathbf{u}^H \mathbf{v}$ is attained by utilizing the similar steps as in \cite[Lemma~5]{ozdogan2019massive}. 
\end{proof}
\subsection{Proof of Theorem~\ref{Theorem:ClosedForm}} \label{Appendix:ClosedForm}
The main proof is to compute all the expectations in \eqref{eq:SINRk}. By utilizing the definition \eqref{eq:zkkprime} with $k' = k$, we compute the numerator of \eqref{eq:SINRk} as
\begin{equation} \label{eq:Numv1}
\begin{split}
&|\mathbb{E} \{z_{kk}\}|^2 \stackrel{(a)}{=} \left| \mathbb{E} \{ \hat{\mathbf{g}}_k^H \mathbf{g}_{k} \}  +  \sum\nolimits_{m=1}^M  \mathbb{E}\{ \hat{g}_{mk}^\ast  g_{mk} \} \right|^2\\
& \stackrel{(b)}{=} \left| \mathbb{E} \{ \|\hat{\mathbf{g}}_k \|^2 \}  +  \sum\nolimits_{m=1}^M  \mathbb{E}\{ |\hat{g}_{mk}|^2 \} \right|^2 \\
& \stackrel{(c)}{=} \left| \|\bar{\mathbf{g}}_k\|^2 +  p \tau_p \mathrm{tr}(\mathbf{R}_k \pmb{\Phi}_k \mathbf{R}_k)  +  \sum\nolimits_{m=1}^M  \gamma_{mk} \right|^2,
\end{split}
\end{equation}
where $(a)$ is attained by the MRC technique; $(b)$ is because the estimation errors have zero means as shown in Lemma~\ref{Lemma:Est}; and $(c)$ is obtained from the distributions of the channel estimates. We now compute the first part in the denominator of \eqref{eq:SINRk} as
\begin{multline} \label{eq:Denov1}
\sum\nolimits_{k'=1}^K \rho_{k'}  \mathbb{E}\{ |z_{kk'}|^2 \} = \underbrace{\sum\nolimits_{k' \in \mathcal{P}_k } \rho_{k'}  \mathbb{E}\{ |z_{kk'}|^2 \}}_{\triangleq \mathsf{MI}_{1}} \\ 
+   \underbrace{\sum\nolimits_{k' \notin \mathcal{P}_k } \rho_{k'}  \mathbb{E}\{ |z_{kk'}|^2}_{\triangleq \mathsf{MI}_{2}}  \}, 
\end{multline}
which demonstrates the coherent and noncoherent mutual interference, explicitly expressed by the pilot reuse set $\mathcal{P}_k$. For $k' \in \mathcal{P}_k$, we process $\mathbb{E}\{ | z_{kk'} |^2 \}$ by utilizing its definition in \eqref{eq:zkkprime}  as follows
\begin{equation} \label{eq:zkkGain}
\begin{split}
& \mathbb{E} \{ |z_{kk'}|^2 \} = \mathbb{E} \{ | a_{kk'}  + \tilde{a}_{kk'} + b_{kk'} + \tilde{b}_{kk'} |^2 \} \\
&= \mathbb{E} \{ | a_{kk'} |^2 \} +  \mathbb{E} \{ | \tilde{a}_{kk'} |^2 \} + \mathbb{E} \{ | b_{kk'} |^2 \} + \\
& \mathbb{E} \{ | \tilde{b}_{kk'} |^2 \} + \mathbb{E}\{ a_{kk'} b_{kk'}^\ast \}  + \mathbb{E}\{ a_{kk'}^\ast b_{kk'} \},
\end{split}
\end{equation}
where $a_{kk'} = \mathbf{u}_k^H \hat{\mathbf{g}}_{k'}$, $\tilde{a}_{kk'} = \mathbf{u}_k^H \mathbf{e}_{k'}$, $b_{kk'} = \sum_{m=1}^M  u_{mk}^\ast \hat{g}_{mk'}$, and $\tilde{b}_{kk'} = \sum_{m=1}^M u_{mk}^\ast e_{mk'}$. Because of the zero mean of additive noise, the remaining expectations vanish in \eqref{eq:zkkGain}.  By noting that $\mathbf{u}_k = \hat{\mathbf{g}}_k$, $\mathbb{E} \{ | a_{kk'} |^2 \}$ in \eqref{eq:zkkGain} is computed as
\begin{equation} \label{eq:akkprime}
\mathbb{E} \{ | a_{kk'} |^2 \} =  \mathbb{E}\{ | \hat{\mathbf{g}}_k^H \hat{\mathbf{g}}_{k'}|^2  \}.
\end{equation}
Since two users~$k$ and $k'$ share the same pilot signal, we can use the channel estimate structure in \eqref{eq:ChannelEstgk} to represent   $\hat{\mathbf{g}}_k$ and $\hat{\mathbf{g}}_{k'}$ as
\begin{align}
	\hat{\mathbf{g}}_k = \bar{\mathbf{g}}_k + \sqrt{p\tau_p} \mathbf{R}_k \pmb{\Phi}_k^{1/2} \mathbf{m}_k,   \hat{\mathbf{g}}_{k'} = \bar{\mathbf{g}}_{k'} + \sqrt{p\tau_p} \mathbf{R}_{k'} \pmb{\Phi}_k^{1/2} \mathbf{m}_k,
\end{align}
where $\mathbf{m}_k  \sim \mathcal{CN}(\mathbf{0}, \mathbf{I}_N)$. Then, we compute $\mathbb{E}\{ | \hat{\mathbf{g}}_k^H \hat{\mathbf{g}}_{k'}|^2  \}$ in \eqref{eq:akkprime} by utilizing Lemma~\ref{Lemma:Moment4v1}, and then obtain
\begin{equation} \label{eq:hatgkgk}
	\begin{split}
	& \mathbb{E} \{ | a_{kk'} |^2 \} =  |\bar{\mathbf{g}}_{k'}^H \bar{\mathbf{g}}_k |^2 +  2 p \tau_p \mathsf{Re} \left\{ \bar{\mathbf{g}}_{k'}^H \bar{\mathbf{g}}_k \mathrm{tr}\big(\mathbf{R}_{k'}\pmb{\Phi}_k \mathbf{R}_k \big) \right\} +  p \tau_p \bar{\mathbf{g}}_k^H    \\
 & \times \mathbf{R}_{k'} \pmb{\Phi}_k \mathbf{R}_{k'}   \bar{\mathbf{g}}_k   +  p \tau_p  \bar{\mathbf{g}}_{k'}^H \mathbf{R}_k \pmb{\Phi}_k \mathbf{R}_k   \bar{\mathbf{g}}_{k'} + p^2 \tau_p^2  \big| \mathrm{tr}\big( \mathbf{R}_{k'}\pmb{\Phi}_k \mathbf{R}_k \big) \big|^2 + \\
 & p^2 \tau_p^2 \mathrm{tr} (\mathbf{R}_{k'}\pmb{\Phi}_k \mathbf{R}_{k'}  \mathbf{R}_k \pmb{\Phi}_k \mathbf{R}_k)  = \big| \bar{\mathbf{g}}_{k}^H \bar{\mathbf{g}}_{k'} +  p \tau_p \mathrm{tr}(\mathbf{R}_{k'} \pmb{\Phi}_k \mathbf{R}_k)\big|^2  +   \\
 & p \tau_p \bar{\mathbf{g}}_k^H \mathbf{R}_{k'} \pmb{\Phi}_k \mathbf{R}_{k'}  \bar{\mathbf{g}}_k  +  p \tau_p  \bar{\mathbf{g}}_{k'}^H \mathbf{R}_k \pmb{\Phi}_k \mathbf{R}_k   \bar{\mathbf{g}}_{k'} + \\
 & p^2 \tau_p^2 \mathrm{tr} (\mathbf{R}_{k'}\pmb{\Phi}_k \mathbf{R}_{k'}  \mathbf{R}_k \pmb{\Phi}_k \mathbf{R}_k ).
	\end{split}
\end{equation}
By applying Lemma~\ref{Lemma:Supp1}, we can compute $\mathbb{E} \{ | \tilde{a}_{kk'} |^2 \}$ in \eqref{eq:zkkGain} in a closed form as 
\begin{equation} \label{eq:ekhatgk}
	\begin{split}
		&\mathbb{E} \{ | \tilde{a}_{kk'} |^2 \} =  \mathbb{E} \left\{ | \hat{\mathbf{g}}_{k}^H \mathbf{e}_{k'} |^2 \right\} \stackrel{(a)}{=}    \bar{\mathbf{g}}_{k}^H \left( \mathbf{R}_{k'} - p \tau_p \mathbf{R}_{k'} \pmb{\Phi}_{k'} \mathbf{R}_{k'} \right)  \bar{\mathbf{g}}_{k} \\
  & + p \tau_p \mathrm{tr} \left(  \left( \mathbf{R}_{k'} - p \tau_p \mathbf{R}_{k'} \pmb{\Phi}_{k'} \mathbf{R}_{k'} \right) \mathbf{R}_k \pmb{\Phi}_k \mathbf{R}_k  \right) \\
		& \stackrel{(b)}{=}  \bar{\mathbf{g}}_{k}^H \mathbf{R}_{k'} \bar{\mathbf{g}}_{k} -  p \tau_p \bar{\mathbf{g}}_{k}^H  \mathbf{R}_{k'} \pmb{\Phi}_{k} \mathbf{R}_{k'}   \bar{\mathbf{g}}_{k} +  p \tau_p \mathrm{tr}( \mathbf{R}_{k'} \mathbf{R}_k \pmb{\Phi}_k \mathbf{R}_k  ) \\
  & - p^2 \tau_p^2 \mathrm{tr}( \mathbf{R}_{k'} \pmb{\Phi}_k \mathbf{R}_{k'} \mathbf{R}_k \pmb{\Phi}_k \mathbf{R}_k),
	\end{split}
\end{equation}
where $(a)$ is obtained using the covariance matrix in \eqref{eq:Ck} that is expressed for the channel estimation error. From \eqref{eq:Relation}, we have $\hat{g}_{mk'} = c_{mk'} \hat{g}_{mk}/ c_{mk}, \forall k' \in \mathcal{P}_k$. Therefore, $\mathbb{E} \{ | b_{kk'} |^2 \}$ in the last equality of \eqref{eq:zkkGain} is computed as follows
\begin{equation}
	\begin{split}
		&\mathbb{E} \{ | b_{kk'} |^2 \} = \mathbb{E}\left\{ \left| \sum\nolimits_{m=1}^M \frac{c_{mk'} }{c_{mk} }   | \hat{g}_{mk}|^2 \right|^2 \right\} \\
  & = \sum\nolimits_{m=1}^M  \sum\nolimits_{m'=1}^M \frac{c_{mk'}  c_{m'k'} }{c_{mk} c_{m'k}  }   \mathbb{E}\{ | \hat{g}_{m'k}|^2 | \hat{g}_{mk}|^2 \}  \\
		& \stackrel{(a)}{=} 2 \sum_{m=1}^M \frac{c_{mk'}^2 }{c_{mk}^2}   \gamma_{mk}^2  + \sum_{m=1}^M  \sum_{m'=1, m'\neq m}^M \frac{c_{mk'}  c_{m'k'} }{c_{mk} c_{m'k}  }  \gamma_{m'k} \gamma_{mk} \\
		&= \sum_{m=1}^M   \gamma_{mk} \gamma_{mk'} + \left| \sum_{m=1}^M  \frac{c_{mk'}  }{c_{mk}  }  \gamma_{mk} \right|^2,
	\end{split}
\end{equation}
where $(a)$ is attained by applying Lemma~\ref{Lemma:Moment4v1} and the channel estimates attained in Lemma~\ref{Lemma:Est}. Besides, $\mathbb{E} \{ | \tilde{b}_{kk'} |^2 \}$ in the last equality of \eqref{eq:zkkGain} is computed as follows
\begin{equation}
\mathbb{E} \{ | \tilde{b}_{kk'} |^2 \} =  \sum\nolimits_{m=1}^M  \mathbb{E}\{  | \hat{g}_{mk}^\ast  e_{mk'}|^2 \}  =     \sum\nolimits_{m=1}^M    \gamma_{mk} (\beta_{mk'} - \gamma_{mk'}),
\end{equation}
since the estimation error and the channel estimate are mutually independent.  $\mathbb{E}\{ a_{kk'}^\ast b_{kk'} \} $  in the last equality of \eqref{eq:zkkGain} is computed as follows
\begin{equation}
\begin{split}
& \mathbb{E}\{ a_{kk'} b_{kk'}^{\ast} \}  = \mathbb{E}\left\{ \sum\nolimits_{m=1}^M \hat{\mathbf{g}}_k^H \hat{\mathbf{g}}_{k'}  \hat{g}_{mk} \hat{g}_{mk'}^\ast \right\} \\
& \stackrel{(a)}{=} \sum\nolimits_{m=1}^M  \frac{c_{mk'}}{c_{mk}} \mathbb{E}\{ \hat{\mathbf{g}}_k^H \hat{\mathbf{g}}_{k'} \} \mathbb{E}\{ |\hat{g}_{mk}|^2 \} \\
&= \sum\nolimits_{m=1}^M  \frac{c_{mk'}}{c_{mk}} \left( p \tau_p \mathrm{tr} (  \mathbf{R}_{k'} \pmb{\Phi}_k \mathbf{R}_k ) + \bar{\mathbf{g}}_k^H \bar{\mathbf{g}}_{k'} \right) \gamma_{mk},
\end{split}
\end{equation}
where $(a)$ is attained by the independence between the satellite and terrestrial channels along with the channel relation in \eqref{eq:Relation}; and $(b)$ is thanks to \eqref{eq:ExpInner}. Similarly, one attains  $\mathbb{E}\{ a_{kk'}^\ast b_{kk'} \}$ in the last equality of \eqref{eq:zkkGain} as follows
\begin{equation}\label{eq:6thterm}
\mathbb{E}\{ a_{kk'}^\ast b_{kk'} \} = \sum\nolimits_{m=1}^M  \frac{c_{mk'}}{c_{mk}} \left( p \tau_p \mathrm{tr} (  \mathbf{R}_{k} \pmb{\Phi}_k \mathbf{R}_{k'} ) + \bar{\mathbf{g}}_{k'}^H \bar{\mathbf{g}}_{k} \right) \gamma_{mk}.
\end{equation}
Plugging \eqref{eq:hatgkgk}--\eqref{eq:6thterm} into \eqref{eq:zkkGain} and doing some algebra, we attain the closed-form expression of $\mathbb{E}\{ |z_{kk'}|^2\}$, and then that of $\mathsf{MI}_1$ is  as follows
\begin{equation} \label{eq:MI1v1}
\begin{split}
&\mathsf{MI}_1 = \sum\nolimits_{k' \in \mathcal{P}_k} \rho_{k'} \left| \bar{\mathbf{g}}_{k}^H \bar{\mathbf{g}}_{k'} +  \mathrm{tr}(\mathbf{R}_{k'} \pmb{\Phi}_k \mathbf{R}_k) +   \sum\nolimits_{m=1}^M  \frac{c_{mk'}  }{c_{mk}  }  \gamma_{mk} \right|^2 \\
&+ \sum\nolimits_{k' \in \mathcal{P}_k} \rho_{k'}  p \tau_p \bar{\mathbf{g}}_{k'}^H \mathbf{R}_k \pmb{\Phi}_k \mathbf{R}_k   \bar{\mathbf{g}}_{k'} + \sum\nolimits_{k' \in \mathcal{P}_k} \rho_{k'}   \bar{\mathbf{g}}_{k}^H \mathbf{R}_{k'} \bar{\mathbf{g}}_{k} + \\
& \sum\nolimits_{k' \in \mathcal{P}_k} \rho_{k'} p \tau_p \mathrm{tr}( \mathbf{R}_{k'} \mathbf{R}_k \pmb{\Phi}_k \mathbf{R}_k  ) + \sum\nolimits_{k' \in \mathcal{P}_k} \rho_{k'}  \sum\nolimits_{m=1}^M \gamma_{mk} \beta_{mk'}.
\end{split}
\end{equation}
Next, the noncoherent interference $\mathsf{MI}_2$ in \eqref{eq:Denov1} is computed as follows
\begin{equation} \label{eq:MI2v2}
\begin{split}
& \mathsf{MI}_2 =	 \sum\nolimits_{k' \notin \mathcal{P}_k } \rho_{k'}  \mathbb{E}\{ | \hat{\mathbf{g}}_k^H \mathbf{g}_{k'}|^2  \} +   \sum\nolimits_{k' \notin \mathcal{P}_k } \sum\nolimits_{m=1}^M \rho_{k'}   \mathbb{E}\{  | \hat{g}_{mk}^\ast  g_{mk'}|^2 \}\\
& =  p \tau_p \sum\nolimits_{k' \notin \mathcal{P}_k } \rho_{k'}   \mathrm{tr}( \mathbf{R}_{k'}\mathbf{R}_k \pmb{\Phi}_k \mathbf{R}_k )  +  p \tau_p \sum\nolimits_{k' \notin \mathcal{P}_k } \rho_{k'}  \times   \\
& \bar{\mathbf{g}}_{k'}^H \mathbf{R}_k  \pmb{\Phi}_k \mathbf{R}_k   \bar{\mathbf{g}}_{k'} + \sum\nolimits_{k' \notin \mathcal{P}_k } \rho_{k'} \bar{\mathbf{g}}_{k}^H \mathbf{R}_{k'}  \bar{\mathbf{g}}_{k}  +    \sum\nolimits_{k' \notin \mathcal{P}_k } \rho_{k'} |\bar{\mathbf{g}}_{k}^H  \bar{\mathbf{g}}_{k'} |^2\\
& + \sum\nolimits_{k' \notin \mathcal{P}_k } \sum\nolimits_{m=1}^M \rho_{k'}    \gamma_{mk} \beta_{mk'},
\end{split}
\end{equation}
thanks to the mutually independent channels from the users utilizing the orthogonal pilot signals. By using \eqref{eq:MI1v1} and \eqref{eq:MI2v2} into \eqref{eq:Denov1}, we can attain  the first part in the denominator of \eqref{eq:SINRk} in closed form as 
\begin{equation} \label{Eq:FirstDe}
\begin{split}
&\sum\nolimits_{k'=1}^K \rho_{k'}  \mathbb{E}\{ |z_{kk'}|^2 \} =  \sum\nolimits_{k' \in \mathcal{P}_k} \rho_{k'} \Big| \bar{\mathbf{g}}_{k}^H \bar{\mathbf{g}}_{k'} +  \mathrm{tr}(\mathbf{R}_{k'} \pmb{\Phi}_k \mathbf{R}_k) \\
& +   \sum\nolimits_{m=1}^M  \frac{c_{mk'}  }{c_{mk}  }  \gamma_{mk} \Big|^2 +  p \tau_p \sum\nolimits_{k \in \mathcal{Q}} \rho_{k'}   \bar{\mathbf{g}}_{k'}^H \mathbf{R}_k \pmb{\Phi}_k \mathbf{R}_k   \bar{\mathbf{g}}_{k'} + \\
& \sum\nolimits_{k \in \mathcal{Q}} \rho_{k'}   \bar{\mathbf{g}}_{k}^H \mathbf{R}_{k'} \bar{\mathbf{g}}_{k}   + p \tau_p \sum\nolimits_{k \in \mathcal{Q}} \rho_{k'}  \mathrm{tr}( \mathbf{R}_{k'} \mathbf{R}_k \pmb{\Phi}_k \mathbf{R}_k  ) \\
& +  \sum\nolimits_{k' \notin \mathcal{P}_k } \rho_{k'} |\bar{\mathbf{g}}_{k}^H  \bar{\mathbf{g}}_{k'} |^2 + \sum\nolimits_{k \in \mathcal{Q}} \rho_{k'}  \sum\nolimits_{m=1}^M \gamma_{mk} \beta_{mk'}.
\end{split}
\end{equation}
Next, the noise power from the satellite link is computed in the closed-form expression as
\begin{equation} \label{eq:3De}
\begin{split}
& \mathbb{E} \big\{ \big| \mathbf{u}_k^H \mathbf{w} \big|^2 \big\} \stackrel{(a)}{=}  \mathbb{E} \big\{ \big| \mathbf{u}_k^H \mathbb{E}\{ \mathbf{w} \mathbf{w}^H \} \mathbf{u}_k \big|^2 \big\} \\
&= \sigma^2 \|\bar{\mathbf{g}}_k\|^2 +  p \tau_p  \sigma^2 \mathrm{tr}(\mathbf{R}_k \pmb{\Phi}_k \mathbf{R}_k), 
\end{split}
\end{equation}
where $(a)$ is attained by the independence of the channel estimate and noise. In a similar manner, the noise power from all the $M$ APs is driven in closed form as
\begin{equation}\label{eq:4De}
\sum\nolimits_{m=1}^M  \mathbb{E} \big\{ | u_{mk}^\ast w_m |^2 \big\} = \sigma_a^2 \sum\nolimits_{m=1}^M  \mathbb{E} \big\{ | \hat{g}_{mk} |^2 \big\} =  \sigma_a^2 \sum\nolimits_{m=1}^M  \gamma_{mk}.
\end{equation}
By plugging \eqref{eq:Numv1}, \eqref{Eq:FirstDe}, \eqref{eq:3De}, and \eqref{eq:4De} into \eqref{eq:SINRk}, we obtain the closed-form expression as in \eqref{eq:ClosedSINR} after doing some algebra.
\subsection{Proof of Theorem~\ref{Theorem:ClosedFormSol}}\label{Appendix:ClosedFormSol}
The iteration indices are ignored for the sake of simplicity. By computing  the first derivative of the Lagrangian function in \eqref{Prob:NMSEkE1} with respect to $v_k$, we obtain
\begin{multline}
\frac{\partial \mathcal{L} }{\partial v_k} = 2\left(\tilde{\rho}_k v_k \left(\|\bar{\mathbf{g}}_k\|^2 +  p \tau_p \mathrm{tr}(\mathbf{R}_k \pmb{\Phi}_k \mathbf{R}_k )  +  \sum_{m=1}^M \gamma_{mk} \right) -1 \right)  \\
\times\left( \|\bar{\mathbf{g}}_k\|^2 +  p \tau_p \mathrm{tr}(\mathbf{R}_k \pmb{\Phi}_k \mathbf{R}_k )  +  \sum\nolimits_{m=1}^M \gamma_{mk} \right)\tilde{\rho}_k + 2v_k  \delta_k , 
\end{multline}
and  the optimal solution to $v_k$ is achieved by solving the equation $\partial \mathcal{L}/\partial v_k = 0$ as shown in the theorem. We derive the first derivative of \eqref{Prob:NMSEkE1} with respect to $\alpha_k$ as
\begin{equation}
\frac{\partial \mathcal{L} }{\partial \alpha_k}  = e_k - \frac{1}{\alpha_k},
\end{equation}
and the optimal solution to $\alpha_k$ is obtained as in the theorem by solving the equation $\partial \mathcal{L} /\partial \alpha_k = 0$. Based on the mutual interference formulated in \eqref{eq:MIk},  let us consider user~$k''$ and define
\begin{multline}
 \mathsf{CI}_{k''} = \sum\nolimits_{k \in \mathcal{P}_{k''} \setminus \{k''\}} \rho_{k'} \Big| \bar{\mathbf{g}}_{k''}^H \bar{\mathbf{g}}_{k} + p \tau_p   \mathrm{tr}(\mathbf{R}_{k} \pmb{\Phi}_{k''} \mathbf{R}_{k''}) \\
 +   \sum\nolimits_{m=1}^M  \frac{c_{mk}  }{c_{mk''}  }  \gamma_{mk''} \Big|^2,
\end{multline}
 then its first derivative with respect to $\tilde{\rho}_k$ is computed as follows
\fontsize{8}{8}{\begin{align} \label{eq:CIkk}
& \frac{\partial \mathsf{CI}_{k''} }{\partial \tilde{\rho}_k} = \notag \\
& \begin{cases}
		2 \tilde{\rho}_{k} \left| \bar{\mathbf{g}}_{k''}^H \bar{\mathbf{g}}_{k} + p \tau_p   \mathrm{tr}(\mathbf{R}_{k} \pmb{\Phi}_{k''} \mathbf{R}_{k''}) +   \sum\limits_{m=1}^M  \frac{c_{mk}  }{c_{mk''}  }  \gamma_{mk''} \right|^2, &  k \in \mathcal{P}_{k''} \setminus \{k''\},\\
		0, & \mbox{otherwise},
\end{cases}  
\end{align}}
which relies on the pilot reuse pattern in \eqref{eq:PilotPattern}. Let us define $\mathsf{NI}_{k''} =  \mathsf{MI}_{k''} + \mathsf{NO}_{k''} -  \mathsf{CI}_{k''} $, then its first derivative with respect to $\tilde{\rho}_k$ is computed as follows
\begin{align}
\frac{\partial \mathsf{NI}_{k''} }{\partial \tilde{\rho}_k} = \begin{cases} \chi_{kk''} ,& \mbox{if } k \in \mathcal{P}_{k''},\\
2\tilde{\rho}_{k} |\bar{\mathbf{g}}_{k''}^H  \bar{\mathbf{g}}_{k} |^2 + \chi_{kk''},& \mbox{otherwise},
\end{cases} 
\end{align}
where the following definition of $\chi_{kk''}$ holds
\begin{multline} \label{eq:xikk}
\chi_{kk''} = 2 \tilde{\rho}_{k}  p \tau_p \bar{\mathbf{g}}_{k}^H \mathbf{R}_{k''} \pmb{\Phi}_{k''} \mathbf{R}_{k''}   \bar{\mathbf{g}}_{k} + 2 \tilde{\rho}_{k} \bar{\mathbf{g}}_{k''}^H \mathbf{R}_{k} \bar{\mathbf{g}}_{k''} \\
+ 2 \tilde{\rho}_{k}  p \tau_p   \mathrm{tr}( \mathbf{R}_{k} \mathbf{R}_{k''} \pmb{\Phi}_{k''} \mathbf{R}_{k''}  ) +  2 \sum\limits_{m=1}^M \tilde{\rho}_{k}  \gamma_{mk''} \beta_{mk}. 
\end{multline}
Consequently, we can derive the first-order derivative of the Lagrangian function in \eqref{Prob:NMSEkE1} with respect to $\tilde{\rho}_k$ as in \eqref{eq:LagrangianPv1}.
\begin{figure*}
\begin{equation} \label{eq:LagrangianPv1}
\begin{split}
&\frac{\partial \mathcal{L} }{\partial \tilde{\rho}_k} =  2 \alpha_k \left(\tilde{\rho}_k v_k \left( \|\bar{\mathbf{g}}_k\|^2 +  p \tau_p \mathrm{tr}(\mathbf{R}_k \pmb{\Phi}_k \mathbf{R}_k )  +  \sum\limits_{m=1}^M \gamma_{mk} \right) -1  \right) \left( \|\bar{\mathbf{g}}_k\|^2 +  p \tau_p \mathrm{tr}(\mathbf{R}_k \pmb{\Phi}_k \mathbf{R}_k )  +  \sum\limits_{m=1}^M \gamma_{mk} \right) v_k  +  \sum_{k'' \in \mathcal{K}} \alpha_{k''} v_{k''}^2  \frac{ \partial \delta_{k''} }{\partial \tilde{\rho}_k }  \\
&+ 2 \lambda_k \tilde{\rho}_k \stackrel{(a)}{=}  2 \alpha_k \left(\tilde{\rho}_k v_k \left( \|\bar{\mathbf{g}}_k\|^2 +  p \tau_p \mathrm{tr}(\mathbf{R}_k \pmb{\Phi}_k \mathbf{R}_k )  +  \sum\limits_{m=1}^M \gamma_{mk} \right)-1 \right)\left( \|\bar{\mathbf{g}}_k\|^2 +  p \tau_p \mathrm{tr}(\mathbf{R}_k \pmb{\Phi}_k \mathbf{R}_k)  +  \sum\limits_{m=1}^M \gamma_{mk} \right) v_k + 2 \lambda_k \tilde{\rho}_k + \\
& \sum_{k'' \in \mathcal{K}} \alpha_{k''} v_{k''}^2  \frac{ \partial \mathsf{CI}_{k''} }{\partial \tilde{\rho}_k } + \sum_{k''\in \mathcal{K}} \alpha_{k''} v_{k''}^2  \frac{ \partial \mathsf{NI}_{k''} }{\partial \tilde{\rho}_k } 
\end{split}
\end{equation}
\vspace{-1cm}
\end{figure*}
where $(a)$ is attained by utilizing $\delta_k$ on its definition. Plugging \eqref{eq:CIkk}--\eqref{eq:xikk} into \eqref{eq:LagrangianPv1}, we obtain the result in \eqref{eq:L}.
\begin{figure*}
\begin{equation} \label{eq:L}
\begin{split}
 & \frac{\partial \mathcal{L} }{\partial \tilde{\rho}_k} =2 \alpha_k \left(\tilde{\rho}_k v_k \left( \|\bar{\mathbf{g}}_k\|^2 +  p \tau_p \mathrm{tr}(\mathbf{R}_k \pmb{\Phi}_k \mathbf{R}_k )  +  \sum\limits_{m=1}^M \gamma_{mk} \right) -1 \right) \left( \|\bar{\mathbf{g}}_k\|^2 +  p \tau_p \mathrm{tr}(\mathbf{R}_k \pmb{\Phi}_k \mathbf{R}_k )  +  \sum\limits_{m=1}^M \gamma_{mk} \right) v_k + 2 \lambda_k \tilde{\rho}_k  \\
& +  \sum_{k'' \in \mathcal{K} }  \alpha_{k''} v_{k''}^2 \chi_{kk''} +  2\sum_{k'' \notin \mathcal{P}_k }  \alpha_{k''} v_{k''}^2  \tilde{\rho}_k |\bar{\mathbf{g}}_{k''}^H \bar{\mathbf{g}}_k|^2 +  2 \sum_{k'' \in \mathcal{P}_{k} \setminus \{k\} } \alpha_{k''} v_{k''}^2  \tilde{\rho}_{k} \left| \bar{\mathbf{g}}_{k''}^H \bar{\mathbf{g}}_{k} + p \tau_p   \mathrm{tr}(\mathbf{R}_{k} \pmb{\Phi}_{k''} \mathbf{R}_{k''}) +   \sum\limits_{m=1}^M  \frac{c_{mk}  }{c_{mk''}  }  \gamma_{mk''} \right|^2 
\end{split} 
\end{equation}
\hrule
\vspace{-0.5cm}
\end{figure*}
with noting that $k'' \in \mathcal{P}_{k}$ is equivalent to $k \in \mathcal{P}_{k''}$. The optimal solution to $\tilde{\rho}_k$ is attained by solving the equation $\partial \mathcal{L} /\partial \tilde{\rho}_k = 0$ as follows
\begin{equation} \label{eq:rhok}
\tilde{\rho}_k =  \alpha_k  v_k \left( \|\bar{\mathbf{g}}_k\|^2 +  p \tau_p \mathrm{tr}(\mathbf{R}_k \pmb{\Phi}_k \mathbf{R}_k)  +  \sum\limits_{m=1}^M \gamma_{mk} \right)  / (t_k + \lambda_k),
\end{equation}
where $t_k$ is defined in \eqref{eq:tkn}, but here without the iteration index. Moreover, the optimal solution expressed in \eqref{eq:rhok} should satisfy the complementary slackness condition, which is
\begin{equation} \label{eq:ComSlack}
\lambda_k (\tilde{\rho}_k^2 - P_{\max,k}) = 0.
\end{equation}
The solution to $\tilde{\rho}_k$ is further defined by combining \eqref{eq:rhok} and \eqref{eq:ComSlack} as follows
\begin{equation}
\tilde{\rho}_k = \begin{cases}
\min(\bar{\rho}_k, \sqrt{P_{\max,k}}), & \mbox{if } \lambda_k =0,\\
\sqrt{P_{\max,k}}, & \mbox{if } \lambda_k =0,
\end{cases}
\end{equation}
where $\bar{\rho}_k$ is given in \eqref{eq:tkn} that is obtained from \eqref{eq:rhok} by setting $\lambda_k = 0$. We stress that the iterative mechanism in Theorem~\ref{Theorem:ClosedFormSol} will converge to a fixed point after a limited number of iterations thanks to a compact feasible domain. We can borrow the main steps in \cite[Theorem~4]{van2018large} to prove that this fixed point is a stationary solution of problem~\eqref{Prob:NMSEkEv2}.
\bibliographystyle{IEEEtran}
\bibliography{IEEEabrv,refs}
\end{document}